\documentclass[preprint,11pt,authoryear]{elsarticle}

\usepackage[utf8]{inputenc}
\usepackage[english]{babel}
\usepackage{url}

\usepackage{amsmath,amssymb}

\usepackage{natbib}
\usepackage{graphicx}
\usepackage{color}
\usepackage{ltablex}
\usepackage{longtable}
\usepackage{subcaption}
\usepackage{float}
\captionsetup{compatibility=false}

\usepackage{amsthm}
\newtheorem{lemma}{Lemma}
\newtheorem{theorem}{Theorem}
\newtheorem{statement}{Statement}

\usepackage{lineno} 
\modulolinenumbers[5]

\graphicspath{{./figures/}}


\newcommand{\bbE}{\mathrm{E}}
\newcommand{\bbV}{\mathrm{Var}}
\newcommand{\vecBeta}{\beta}
\newcommand{\vecGamma}{\gamma}
\newcommand{\sigu}{\sigma_u}
\newcommand{\sigv}{\sigma_v}
\newcommand{\sigs}{\sigma_*}

\journal{European journal of Operational Research}

\begin{document}

\begin{frontmatter}

\title{Technical efficiency and inefficiency: \\ SFA misspecification}

\author{S.~Kumbhakar\fnref{myfootnote}}
\address{State University of New York at Binghamton, USA, kkar@binghamton.edu}
\fntext[myfootnote]{Corresponding author}

\author{A.~Peresetsky}
\address{National Research University Higher School of Economics, Moscow, Russia}

\author{Y.~Shchetynin}
\address{AO Kaspersky Lab, Moscow, Russia, evgeniy.schetinin@gmail.com}

\author{A.~Zaytsev}
\address{Skolkovo Institute of Science and Technology, Moscow, Russia, a.zaytsev@skoltech.ru}

\begin{abstract}

The effect of external factors $z$ on technical inefficiency ($TI$) in stochastic frontier (SF) production models is often specified through the variance of inefficiency term $u$.
In this setup the signs of marginal effects of $z$ on $TI$ and technical
efficiency $TE$ identify how one should control $z$ to increase $TI$ or decrease $TE$.
We prove that these signs for $TI$ and $TE$ are opposite for typical setups with normally distributed random error $v$ and exponentially or half-normally distributed $u$ for both conditional and unconditional case.

On the other hand, we give an example to show that signs of the marginal effects of $z$ on $TI$ and  $TE$ may coincide, at least for some ranges of $z$. In our example, the distribution of $u$ is a mixture of two distributions, and the proportion of the mixture is a function of $z$.  Thus if the real data come from this mixture distribution, and we estimate model parameters with an exponential or half-normal distribution for $u$, the estimated efficiency and the marginal effect of $z$ on $TE$ would be wrong. 
Moreover for a misspecified model, the rank correlations between the true and the estimated values of TE could be small and even negative for some subsamples of data.
These results are demonstrated by simulations.


\end{abstract}

\begin{keyword}
    Productivity and competitiveness,
	stochastic frontier analysis, model misspecification, 
	efficiency, inefficiency
\end{keyword}

\end{frontmatter}


\section{Introduction}



Stochastic frontier (SF) production model \citep{aigner1977formulation,meeusen1977efficiency} is designed to estimate the observation-specific technical inefficiency $TI$. 
The model has two separate error terms: a symmetrical statistical noise $v$ and a non-negative error term $u$ that represents the technical inefficiency. 
The complete specification of SF model also includes specification of distributions for $v$ and $u$.
If $v$ has a normal distribution, and $u$ has an exponential distribution, then the SF model is called normal-exponential,
if $v$ has a normal distribution, and $u$ has a half-normal distribution, then the SF model is called normal-half normal.
To accommodate determinants of inefficiency $z$, the SF model is generalized to make $u$ heteroscedastic (\cite{kumbhakar2000stochastic, wang2003stochastic}, among many others). 

Our goal is to investigate marginal effects of $z$ on $TI$ as well as  technical efficiency ($TE$) for the normal-exponential and normal-half normal models. We assume $u$ to be heteroscedastic, i.e., the variance of $u$ is a function of $z$. 
Suppose that an increase in $z$ leads to an increase in $TI$ measured as $\bbE (u)$ or $\bbE (u|(v - u))$. 
Does it mean that $TE$ measured as $TE = \bbE (e^{-u})$ or $TE = \bbE (e^{-u}|(v - u))$ (see \cite{battese1988prediction}) will decrease? 
Althow it is inuitive
to the best of our knowledge there is no formal proof of this in the literature.
We provide a proof of this statement for the conditional means for two  exponential and half-normal distribution of $u$. 

A number of papers in the past have considered similar issues. For example,
\cite{wang2002dependence},  \cite{ray2015benchmarking}
derived an expression for marginal effects of the $z$ variables on the expected value of inefficiency $\bbE (u)$ and showed that the sign of the sign of the marginal effects of $z$ is determined by the sign of $z$ in the variance function of $u$. 
\cite{kumbhakar-sun2013} derived formulas for the marginal effect of exogenous factors on the 
observation-specific inefficiency $\bbE (u|(v - u))$ for the normal-truncated normal model with heteroscedasticity in both  $v$ and $u$. They demonstrated that for this model signs of the marginal effect may vary across observations.

In addition to the   stochastic frontier model with exponential or half-normal distribution of the inefficiency term we consider a model with a discrete distribution of the inefficiency term.
Properties of these models can differ from the properties of the commonly used SF models \citep{kumbhakar2000stochastic}. 
First, for such models an increase in $z$ may increase both $TI$ and $TE$, which is not possible in the usual normal-exponential  and normal-half normal models.
It means that if the true model for $u$ is the discrete model then applying the usual normal-exponential 
model may result in wrong conclusions on the directions of the marginal effects of the $z$ variables on $TE$ of the production units.
Also, it may result in incorrect rankings of the production units by their estimated $TE$. 
More generally the ranking of the production units by their estimated $TE$ might be different from their rankings in terms of their "true" $TE$. 

The impact of the model misspecification on the estimated TE is was studied, using simulations, among other papers in  \cite{yu1998effects, ruggiero1999efficiency,
ondrich2001efficiency, andor2017pseudolikelihood, andor2019combining}. \cite{ruggiero1999efficiency} concluded, that if data are generated by normal-half normal model, then TE estimates by true (normal-half normal) and misspesified (normal-exponential) models provide similar results. Thus this type of misspecification in incorrect choice of the error distribution is not problematic. 
Some papers~\citep{yu1998effects,ruggiero1999efficiency,ondrich2001efficiency} use rank correlation between true and estimated values of TE as a measure of the model misspecification. Another papers~\citep{andor2017pseudolikelihood,andor2019combining} use RMSE measure as the distance between true and estimated TE for performance comparison of different models.
\cite{giannakas2003predicting} demonstrated that predictions of TE are
sensitive to the misspecification of the functional form of the production function
in stochastic frontier regression. 

The rest of the paper is organized as follows. In Section~\ref{sec:limit_effects} we introduce the normal-exponential and normal-half normal model and derive the formulas for computing the marginal effects of determinants of technical efficiency and technical inefficiency $z$. 
This is followed by Section 3~\ref{sec:discrete_distr} where we introduce the normal-discrete SF model and examine its properties. 
 Section 4 concludes the paper. The proofs are provided in Appendix~\ref{sec:proofs}.

\section{Marginal effects of exogenous determinants on technical inefficiency and technical efficiency}
\label{sec:limit_effects}


For cross-sectional data the basic SF model (\cite{aigner1977formulation, meeusen1977efficiency}) is 

\begin{equation}
\label{eq:model_specification}
	y_i = \beta_0 + f(x_{i}, \vecBeta) + v_i - u_i, i = 1, \ldots, N,
\end{equation}
where $y_i$ is log output, $x_{i}$ is a $k \times 1$ vector of inputs (usually in logs), $\vecBeta$ is $k \times 1$ vector of coefficients; $N$ is the number of observations. The production function $f(\cdot)$ usually takes the log-linear (Cobb-Douglas) or the transcendental logarithmic (translog) form. 
The noise and inefficiency terms, $v_i$ and $u_i$, respectively, are assumed to be independent of each other and also independent of $x$. 
The sum $\varepsilon_i = v_i - u_i$ is often labeled as the composed error term.
This assumption is relaxed in some recent papers, see \cite{Lai-Kumhakar2019} and the references therein.

To separate noise from inefficiency the SF models assume distributions for both $v$ and $u$. The popular assumption on the noise term is that $v_i \sim i.i.d. \mathcal{N}(0, \sigv^2)$.
Several alternative assumptions are made on the inefficiency term, $u_i$. The most popular ones are exponential and half-normal.
We refer to these specifications as \emph{the normal-exponential model} and \emph{the normal-half normal model}~ \eqref{eq:model_specification}.

As an alternative we consider a model in which the inefficiency term follows a discrete distribution:
$u$ takes a value $u_1$ with probability $ p$ and a value $u_2$ with probability $1 - p$. 
Here $u_1 > 0, \,  u_2 > 0, \, 0 < p < 1$. 
We refer to this specification as \emph{the normal-discrete model}. 
We show that the behaviour of this model can be richer than the behaviour for the normal-exponential and normal-half normal models.


Technical efficiency in model~\eqref{eq:model_specification} can be defined in several ways.
\cite{aigner1977formulation} suggested $\bbE(u)$ as the measure of the mean technical inefficiency. 
Later, Lee and Tyler (1978) proposed $\bbE(e^{-u})$ as the measure of the mean technical efficiency.
Without determinants, these measures are not observation-specific. To make it observation-specific Jondrow et al. (1982) suggested $\bbE(u_i | \varepsilon_i)$ as a predictor of $TI$. 
Following this procedure Battese and Coelli (1988) suggested $\bbE(e^{-u_i}|\varepsilon_i)$ as a predictor of observation-specific measures of $TE$.

Since we model determinants of $TI$ via the $z$ variables in the variance of $u$, $\sigu$, without loss of generality we write $\sigu = \sigu(z)$. For convenience we consider only one $z$ variable.
A popular specification in the literature is $\sigu (z) = \exp(z' \vecGamma) = \exp(\gamma_0 + \gamma z) > 0$.

If $\gamma > 0$, then
\[
\frac{\partial \sigu}{\partial z} = \sigu(z) \, \gamma > 0.
\] 
Thus an increase in $z$ causes  $\sigu$ to increase. 
Intuition tells us that in this case $TI$ measured by either $\bbE (u(z))$ or $\bbE (u(z)|\varepsilon) $ will increase while  $TE$ measured by either $\bbE (e^{-u(z)})$ or $\bbE (e^{-u(z)}|\varepsilon)$ will decrease. 
Below we show that it is true for the normal-exponential and the normal-half normal models. 
However, the situation with the normal-discrete model can be different.

In the next subsections we examine these predictors of $TI$ and $TE$ for the two models: normal-exponential and normal-half normal.
In the next section we move to the normal-discrete model.

\subsection{Exponential distribution of inefficiency}
\label{exponential_1}

The two common models for $u \geq 0$ are an exponential distribution and a half-normal distribution.
If $u$ follows an exponential distribution it has the following probability density function:
\begin{equation}
\label{eq:exp_density}
	f(u) = \frac{1}{\sigu(z)}\exp{\left(-\frac{u(z)}{\sigu(z)}\right)}, \:\: u\geqslant 0,
\end{equation}

Technical inefficiency $TI$ and the technical efficiency $TE$ can be predicted from:

\begin{align}
&\bbE(u) = \sigu,
\nonumber \\
&\bbE\left(e^{-u}\right) = \frac{1}{\sigu + 1}. 
\label{eq_te_exp-u}
\end{align}




One can obtain marginal effects of $z$ on 
the mean technical inefficiency 	
TI and 
the mean technical efficiency 	
TE from the equations which are:

\begin{align}
& \frac{\partial \bbE(u)}{\partial z} 
=\frac{\partial \sigu}{\partial z}
 	  			, \\
& \frac{\partial \bbE\left(e^{-u}\right)}{\partial z} =
-\frac{1}{(\sigu + 1)^2}\frac{\partial \sigu}{\partial z}.
\end{align}
Thus the signs of the marginal effects of $z$ on 
$TI =  \bbE(u)$ and  $TE = \bbE(e^{-u})$
have opposite signs. 
If $z$  increases inefficiency, it will decrease efficiency  and vice versa. 

Instead of using the unconditional mean,
one can use the conditional means \cite{jondrow1982estimation} to estimate $TI$ and the \cite{battese1988prediction} to estimate  $TE$.
These estimators can then be used to compute the marginal effects of $z$.

It is believed that for both the 
unconditional and conditional (observation specific)
estimates of $TI = \bbE(u_i| \varepsilon_i)$ and $TE = \bbE\left(e^{-u_i}| \varepsilon_i\right)$, 
discussed below, the marginal effects of $z$ on $TI$ and $TE$ have opposite signs. 
However, we failed to find a proof of this result in the literature. 
We provide the proof of these results in four Theorems below.

In the empirical literature the conditional mean is widely used to estimate both $TI$ and $TE$. 
The advantage of using the conditional means is that the resulting estimates of $TI$ and $TE$ are observation-specific without the $z$ variables explaining inefficiency.  
However, since our focus is the marginal effects, we assume there are determinants.

The conditional mean (\cite{jondrow1982estimation} measure of $TI$ and $TE$ (\cite{battese1992frontier}  (after dropping the `$i$' subscript to avoid clutter of notation) for the normal-exponential case are (\cite{kumbhakar2000stochastic})
\begin{align}
    TI = \bbE(u|\varepsilon) &= \frac{\sigv\phi\left(\frac{\mu_*}{\sigv}\right)}{\Phi\left(\frac{\mu_*}{\sigv}\right)}+\mu_*,
    \label{eq_e_u2} \\
	TE = \bbE(e^{-u}|\varepsilon) &=\frac{\exp\left(-\mu_* + \frac{\sigv^2}{2}\right)
	\Phi\left(\frac{\mu_*}{\sigv} - \sigv\right)}
    {\Phi\left(\frac{\mu_*}{\sigv}\right)}, \label{eq_te_exp} \\
    \mu_* &= -{\varepsilon} - \frac{\sigv^2}{\sigu},  \label{eq_mu_exp}
\end{align}
where ${\varepsilon} = v-u$, $\phi(\cdot)$ is the probability density function and $\Phi(\cdot)$ is the cumulative distribution function of the standard normal variable. In deriving this formula, $v$ is assumed to be $i.i.d.$ normal and $u$ is  $i.i.d.$ exponential (see \cite{kumbhakar2000stochastic}).
Note: both $TI$ and $TE$ are observation-specific.


The marginal effects of $z$ can be computed from 
 $\frac{\partial \bbE(u|\varepsilon)}{\partial z}$ and $\frac{\partial \bbE(e^{-u}|\varepsilon)}{\partial z}$:
\begin{align}
&\frac{\partial \bbE(u|\varepsilon)}{\partial z} 
= \frac{\partial \bbE(u|\varepsilon)}{\partial \sigu(z)} 
\frac{\partial \sigu(z)}{\partial z}
  			          \label{ne_ineff_eps} \\
&\frac{\partial \bbE(e^{-u}|\varepsilon)}{\partial z} =
\frac{\partial \bbE(e^{-u}|\varepsilon)}{\partial \sigu(z)}
\frac{\partial \sigu(z)}{\partial z}
                                        \label{ne_eff_eps}
\end{align}

So, to prove that marginal effects of $z$ on of the inefficiency and the technical efficiency have opposite 
signs, it is enough to prove that the marginal effects of
$\sigu$
on $TI$ and $TE$ have 
opposite signs
\footnote{In some papers (e.g.~\cite{ruggiero1999efficiency}; 
	\cite{ondrich2001efficiency}) efficiency is defined as $E(-u| \varepsilon)$, thus, these marginal effects are opposite by definition.}. 

We derive these in Statements~\ref{statement:e}, \ref{statement:te} and prove the result about signs
in Theorems~\ref{theorem:sign_exponential_e} and~\ref{theorem:sign_exponential}.
To avoid notational clutter from now on we write $\sigu$ instead of $\sigu(z)$.

\begin{statement}
\label{statement:e}
For the normal-exponential model \eqref{eq:model_specification}--\eqref{eq:exp_density}
the marginal effect of the $\sigu$ on the inefficiency~\eqref{eq_e_u2} is:
\begin{align}
&\frac{\partial \bbE(u|\varepsilon)}{\partial\sigu} =
 \frac{\sigv^2}{\sigu^2}
 \left(\frac{\Phi^2(t)-\phi^2(t)-t\phi(t)\Phi(t)}{\Phi^2(t)}\right),
\label{eq_marg_exp_e}
\end{align}
where \, $t=\frac{\mu*}{\sigv} =
 -\frac{\varepsilon}{\sigv} - \frac{\sigv}{\sigu}$.
\end{statement}

\begin{proof}
\begin{align}\nonumber
\frac{\partial \bbE(u|\varepsilon)}{\partial\sigu}
&= \frac{\partial \bbE (u|\varepsilon)}{\partial t}\frac{\partial t} {\partial\sigu} = \frac{\sigv}{\sigu^2} 
\frac{\partial}{\partial t}\left(\sigv\frac{\phi(t)}{\Phi(t)}+ z \sigv \right) = \\
&= \frac{\sigv^2}{\sigu^2}
\left(\frac{\Phi^2(t)-\phi^2(t)-t\phi(t)\Phi(t)}{\Phi^2(t)}\right). \nonumber
\end{align}
\end{proof}

\begin{statement}
\label{statement:te}
For the normal-exponential model \eqref{eq:model_specification}--\eqref{eq:exp_density}
the marginal effect of the $\sigu$  on technical efficiency $TE = \bbE (\exp(-u)|\varepsilon)$ equals:

\begin{align}
	\frac{\partial TE}{\partial\sigu} &= \frac{\sigv}{\sigu^2}\cdot
	\frac{\exp\left(-t \sigv + \frac{\sigv^2}{2}\right)}
	{\Phi^2(t)} 
	\times  \nonumber  \\
    &\times \bigl(-\sigv\Phi(t-\sigv)\Phi(t) 
	+ \phi(t-\sigv)\Phi(t) - \Phi(t-\sigv)\phi(t)\bigr),
	\label{eq_marg_exp}
\end{align}
where as before \, $t=\frac{\mu*}{\sigv} =
-\frac{\varepsilon}{\sigv} - \frac{\sigv}{\sigu}$.
\end{statement}

\begin{proof}
 From~\eqref{eq_te_exp}--\eqref{eq_mu_exp} we get:
\begin{align}\nonumber
TE &= \bbE(e^{-u}|\varepsilon) 
    =\frac{\exp\left(-t\sigv + \frac{\sigv^2}{2}\right) 	\Phi(t - \sigv)}
     {\Phi\left(t\right)},   \nonumber
\end{align}
thus
\begin{align}\nonumber
\frac{\partial TE}{\partial\sigu}  
  &= \frac{\partial TE}{\partial t}  \frac{\partial t}{\partial \sigu} 
   = \frac{\sigv}{\sigu^2} \frac{\partial }{\partial t} 
   \frac{\exp\left(-t\sigv + \frac{\sigv^2}{2}\right)\Phi(t - \sigv)} {\Phi(t)} \\
  &= \frac{\sigv}{\sigu^2}\cdot
   \frac{\exp\left(-t \sigv + \frac{\sigv^2}{2}\right)}
   {\Phi^2(t)} 
   \times \\ \nonumber
   &\times \bigl(-\sigv\Phi(t-\sigv)\Phi(t) 
   + \phi(z-\sigv)\Phi(t) - \Phi(t-\sigv)\phi(t)\bigr).  \nonumber
\end{align}


\end{proof}

\begin{theorem}
\label{theorem:sign_exponential_e}
For the normal-exponential model defined by \eqref{eq:model_specification} and \eqref{eq:exp_density}
the marginal effect of  $\sigu$ on $E(u|\varepsilon)$ is non-negative. 
That is, if  $\sigu$ increases, technical inefficiency estimated by $E(u|\varepsilon)$ also increases:
\[
		\frac{\partial E(u|\varepsilon)}{\partial \sigu}\geqslant 0
\]
\end{theorem}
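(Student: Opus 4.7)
The plan is to reduce the inequality to a statement about a quantity that is manifestly non-negative. By Statement~\ref{statement:e}, the derivative equals the positive factor $\sigma_v^2/\sigma_u^2$ times $\bigl(\Phi^2(t) - \phi^2(t) - t\phi(t)\Phi(t)\bigr)/\Phi^2(t)$. Since the prefactor and $\Phi^2(t)$ are strictly positive, the problem reduces to showing that, for every real $t$,
\begin{equation*}
Q(t) \;:=\; 1 - \left(\frac{\phi(t)}{\Phi(t)}\right)^{2} - t \cdot \frac{\phi(t)}{\Phi(t)} \;\geqslant\; 0.
\end{equation*}

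The cleanest route I would take is a probabilistic one. Let $Y \sim \mathcal{N}(0,1)$ and consider the conditional distribution of $Y$ given $Y > -t$, i.e.\ a truncated standard normal. Writing $h(t) = \phi(t)/\Phi(t)$, a standard computation with the density of the truncated normal (using only $\phi'(s) = -s\phi(s)$ and $\Phi'(s) = \phi(s)$) gives the familiar identities
\begin{equation*}
\bbE[Y \mid Y > -t] \;=\; h(t), \qquad \bbV[Y \mid Y > -t] \;=\; 1 - t\,h(t) - h^{2}(t).
\end{equation*}
Thus $Q(t)$ is exactly the variance of a truncated standard normal, and hence $Q(t) \geqslant 0$. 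Combining with the sign of the prefactor yields $\partial \bbE(u \mid \varepsilon)/\partial \sigma_u \geqslant 0$, as desired.

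The only real obstacle is justifying non-negativity of $Q(t)$ for all $t \in \mathbb{R}$; the probabilistic interpretation above makes this immediate and uniform in $t$. An alternative purely analytic route would define $F(t) = \Phi^{2}(t) - \phi^{2}(t) - t\phi(t)\Phi(t)$, verify $F(0) = 1/4 - 1/(2\pi) > 0$, and check the sign of $F'(t)$; for $t \geqslant 0$ all terms of $F'$ are non-negative, while for $t < 0$ one would need a slightly more careful estimate. I would prefer the truncated-normal argument since it avoids any case split on the sign of $t$ and ties the inequality to a standard identity that the reader will recognize.
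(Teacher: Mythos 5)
Your proof is correct and takes essentially the same route as the paper: the paper's Lemma~\ref{Lemma0} (reproduced from Sampford, 1953) establishes $1 - t\lambda(t) - \lambda(t)^2 \geqslant 0$ by exactly this identification of the expression with the variance of a one-sided truncated standard normal, and the paper then combines it with the positive prefactor from Statement~\ref{statement:e} just as you do.
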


\begin{theorem}
\label{theorem:sign_exponential}
For the normal-exponential model defined by \eqref{eq:model_specification} and \eqref{eq:exp_density}
the marginal effect of  $\sigu$ on  $TE=E(e^{-u}|\varepsilon)$ is non-positive. 
That is, if  $\sigu$ increases, $TE$ decreases:
\[
		\frac{\partial \bbE( e^{-u}|\varepsilon)}{\partial \sigu}\leqslant 0.
\]
\end{theorem}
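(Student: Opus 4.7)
The plan is to start from the explicit expression in Statement~\ref{statement:te} and peel off the manifestly positive prefactor $\frac{\sigv}{\sigu^2}\cdot\frac{\exp(-t\sigv+\sigv^2/2)}{\Phi^2(t)}$, so that the sign of $\partial TE/\partial\sigu$ is controlled by the bracketed quantity
\[
B(t) \;=\; -\sigv\,\Phi(t-\sigv)\Phi(t) + \phi(t-\sigv)\Phi(t) - \Phi(t-\sigv)\phi(t).
\]
Dividing through by $\Phi(t)\Phi(t-\sigv)>0$, the inequality $B(t)\leq 0$ is equivalent to
\[
\frac{\phi(t-\sigv)}{\Phi(t-\sigv)} - \frac{\phi(t)}{\Phi(t)} \;\leq\; \sigv.
\]
So the whole theorem reduces to a one-variable inequality about the inverse Mills ratio $g(x)=\phi(x)/\Phi(x)$: namely, that $g$ never decreases by more than the length of the interval, i.e.\ $g(s)-g(s+\sigv)\leq \sigv$ for $s=t-\sigv$.

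Next, I would invoke the mean value theorem to rewrite $g(s)-g(s+\sigv)=-g'(\xi)\,\sigv$ for some $\xi\in(s,s+\sigv)$. Since $\sigv>0$, the desired inequality becomes $g'(\xi)\geq -1$ for all $\xi$. A direct differentiation gives
\[
g'(x) \;=\; \frac{-x\phi(x)\Phi(x)-\phi^2(x)}{\Phi^2(x)},
\qquad
g'(x)+1 \;=\; \frac{\Phi^2(x)-\phi^2(x)-x\phi(x)\Phi(x)}{\Phi^2(x)}.
\]
But the numerator on the right is exactly the expression that appears in Statement~\ref{statement:e} (up to the positive factor $\sigv^2/\sigu^2$), and its non-negativity is precisely the content of Theorem~\ref{theorem:sign_exponential_e}. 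Therefore $g'(\xi)+1\geq 0$ everywhere, which closes the chain and proves $\partial TE/\partial\sigu\leq 0$.

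The main obstacle I anticipate is really the bookkeeping in the first step: one must verify that after dividing by $\Phi(t)\Phi(t-\sigv)$ the inequality indeed collapses to the clean Mills-ratio statement $g(t-\sigv)-g(t)\leq\sigv$, and that the sign of the overall prefactor is unambiguous. Once this reduction is in place, the argument is essentially mechanical: MVT plus the already-established Theorem~\ref{theorem:sign_exponential_e}. An alternative (if one did not wish to lean on Theorem~\ref{theorem:sign_exponential_e}) would be a direct proof that $H(x):=\Phi^2(x)-\phi^2(x)-x\phi(x)\Phi(x)\geq 0$ via $H(x)\to 0$ as $x\to-\infty$ together with $H'(x)=\phi(x)\Phi(x)(1-x g(x)-g(x)^2)\cdot(\text{sign check})$, but since Theorem~\ref{theorem:sign_exponential_e} is stated just above, reusing it is the cleanest route.
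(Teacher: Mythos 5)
Your proof is correct and takes essentially the same route as the paper's: after stripping the positive prefactor and dividing by $\Phi(t)\Phi(t-\sigv)>0$, both arguments reduce the claim to the Mills-ratio inequality $\lambda(t-\sigv)-\lambda(t)\le\sigv$ and settle it via the bound $\lambda'\in(-1,0)$ (Lemma~\ref{Lemma0}, which is also what underlies Theorem~\ref{theorem:sign_exponential_e}, the result you reuse). The only cosmetic difference is that you invoke the mean value theorem where the paper integrates $f'(s)\le 0$ from $f(0)=0$.
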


Proofs of 
Theorems~\ref{theorem:sign_exponential_e}-\ref{theorem:sign_exponential} 
are given in Appendix~\ref{sec:proofs}.

\subsection{Half-normal distribution of inefficiency}
\label{half_normal_2}

If $u$ follows a half-normal distribution it has the following probability density function:
\begin{equation}
\label{eq:half_normal_density}
	f(u) = \frac{\sqrt{2}}{\sqrt{\pi} \sigu(z)}\exp{\left(-\frac{u(z)^2}{2 \sigu^2(z)}\right)}, \:\: u\geqslant 0,
\end{equation}

The technical inefficiency $TI$ and the technical efficiency $TE$ can be measured as (see, e.g. \cite{kumbhakar2000stochastic}):

\begin{align}
&\bbE(u) = \sigu\sqrt{\frac{2}{\pi}},                  
\nonumber \\
&\bbE\left(e^{-u}\right) = 2\left(1-\Phi(\sigu)\right)\exp\left(\frac{\sigu^2}{2}\right). 
\label{eq_te_hl-u}
\end{align}

One can obtain marginal effects of $z$ on 
the mean technical inefficiency 	
TI and 
the mean technical efficiency 	
TE from the equations which are:

\begin{align}
& \frac{\partial \bbE(u)}{\partial z} 
=\sqrt{\frac{2}{\pi}}\frac{\partial \sigu}{\partial z}, \label{eq:marginal_ti} \\
& \frac{\partial \bbE\left(e^{-u}\right)}{\partial z} =
2\frac{\partial\sigu}{\partial z}\exp\left(\frac{\sigu^2}{2}\right) \left(\sigma_u - \phi(\sigu) - \Phi(\sigu)\sigu\right).
\label{eq:marginal_te}
\end{align}

\begin{statement}
Marginal effects on TI and TE in~\eqref{eq:marginal_te} and~\eqref{eq:marginal_ti}
have different signs.
\end{statement}

The statement follows from the negativity of $x - \phi(x) - x \Phi(x)$,
for example, see inequality (2) in~\cite{sampford1953some}: $\frac{\phi(x)}{1 - \Phi(x)} > x$.

The conditional mean measure of $TI$ \citep{jondrow1982estimation} and $TE$ \citep{battese1992frontier} for the normal-half normal case are \citep{kumbhakar2000stochastic}

\begin{align}
    \bbE(u|\varepsilon) &= \frac{\sigs\phi\left(\frac{\mu_*}{\sigs}\right)}{\Phi\left(\frac{\mu_*}{\sigs}\right)}+\mu_*,
    \label{eq_e_u2_hn} \\
	TE = \bbE(e^{-u}|\varepsilon) &=\frac{\exp\left(-\mu_* + \frac{\sigs^2}{2}\right)
	\Phi\left(\frac{\mu_*}{\sigs} - \sigs\right)}
    {\Phi\left(\frac{\mu_*}{\sigs}\right)}, \label{eq_te_hn} \\
    \mu_* &= \frac{-\sigu^2\varepsilon}{\sigv^2+\sigu^2}, \label{eq_mu_hn}\\
    \sigs^2 &= \frac{\sigv^2\sigu^2}{\sigv^2+\sigu^2}.\label{eq_sgm_hn}
\end{align}

\begin{theorem}
\label{theorem:sign_half_normal_e}
For the normal-half normal model defined by \eqref{eq:model_specification} and \eqref{eq:half_normal_density}
the marginal effect of  $\sigu$ on $E(u|\varepsilon)$ is non-negative. 
That is, if  $\sigu$ increases, technical inefficiency estimated by $E(u|\varepsilon)$ also increases:
\[
		\frac{\partial E(u|\varepsilon)}{\partial \sigu} \geqslant 0
\]
\end{theorem}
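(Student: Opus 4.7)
The plan is to change variables so that the derivative collapses, reformulate the remaining inequality as a monotonicity of a truncated second moment, and prove that monotonicity by a likelihood-ratio argument. Relative to the exponential case, the extra complication is that both $\mu_*$ in \eqref{eq_mu_hn} and $\sigs$ in \eqref{eq_sgm_hn} depend on $\sigu$, so a naive chain rule produces two competing contributions.

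I would begin by setting $r:=\sigu/\sqrt{\sigv^2+\sigu^2}\in[0,1)$. A short calculation gives $\sigs=\sigv r$, $\mu_*=-\varepsilon r^2$ and $a:=\mu_*/\sigs=-\varepsilon r/\sigv$. Since $\partial r/\partial\sigu>0$, it suffices to show that $\bbE(u|\varepsilon)$ is non-decreasing in $r$. Writing $h(a):=\phi(a)/\Phi(a)$ and $m(a):=h(a)+a$ gives $\bbE(u|\varepsilon)=\sigs[h(a)+a]=\sigv r\, m(a)$, and combining $\partial a/\partial r=-\varepsilon/\sigv$ with $-\varepsilon r=a\sigv$ the derivative collapses to
\[
\frac{\partial \bbE(u|\varepsilon)}{\partial r}=\sigv\bigl[m(a)+a\,m'(a)\bigr]=\sigv\,\frac{d}{da}\bigl[a\,m(a)\bigr].
\]
The theorem therefore reduces to showing that $a\,m(a)=a^2+a\,h(a)$ is non-decreasing on $\mathbb{R}$.

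For this last step I would switch to a probabilistic reformulation. Let $Z\sim\mathcal{N}(0,1)$ and $W_a:=Z+a\sim\mathcal{N}(a,1)$. The standard identities $\int_{-a}^{\infty}z\phi(z)\,dz=\phi(a)$ and $\int_{-a}^{\infty}z^2\phi(z)\,dz=\Phi(a)-a\phi(a)$ yield
\[
\bbE[W_a^2\mid W_a\geq 0]=1-a\,h(a)+2a\,h(a)+a^2=1+a\,m(a),
\]
so the claim is equivalent to monotonicity in $a$ of the second moment of $W_a$ conditioned on $W_a\geq 0$. The conditional density on $[0,\infty)$ is proportional to $\phi(w-a)$, and for $a_2>a_1$ the ratio $\phi(w-a_2)/\phi(w-a_1)\propto\exp\bigl((a_2-a_1)w\bigr)$ is strictly increasing in $w$. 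Hence the conditional distributions are ordered in the monotone-likelihood-ratio (and therefore first-order stochastic) sense, and since $w\mapsto w^2$ is non-decreasing on $[0,\infty)$ we conclude $\bbE[W_{a_2}^2\mid W_{a_2}\geq 0]\geq\bbE[W_{a_1}^2\mid W_{a_1}\geq 0]$, finishing the proof.

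The step I expect to be the main obstacle is justifying $\frac{d}{da}[a\,m(a)]\geq 0$ directly for $a<0$: the purely algebraic expression $2a+h(a)\bigl[1-a^2-a\,h(a)\bigr]$ mixes terms of opposite signs, and asymptotics as $a\to-\infty$ show that $m(a)$ and $-a\,m'(a)$ both behave like $1/|a|$ with leading orders that cancel exactly, so any attempt using only crude Mills-ratio bounds is doomed. The probabilistic identification with a truncated second moment turns this delicate cancellation into a one-line stochastic-dominance argument, which is why this route seems most natural.
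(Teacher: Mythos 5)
Your proof is correct, and after the (standard) reduction it takes a genuinely different route from the paper's. Both arguments reparametrize so that everything depends on the single variable $a=\mu_*/\sigs$ and reduce the theorem to the same one-dimensional inequality: your $\frac{d}{da}\bigl[a\,m(a)\bigr]\geq 0$ with $m(a)=\lambda(a)+a$ is, after expanding $\lambda'(a)=-a\lambda(a)-\lambda^2(a)$, exactly the paper's inequality $\lambda(z)+2z-z^2\lambda(z)-z\lambda^2(z)>0$ (their equation \eqref{eq:main_th}). The difference is entirely in how that inequality is proved. The paper splits into the cases $z>0$ and $z<0$, and for $z<0$ imports two external estimates on the Mills ratio (a Gasull--Utzet inequality, repackaged as their Statement~\ref{statement:4}, and a Baricz lower bound $\lambda(z)>\frac14(\sqrt{z^2+8}-3z)$), then grinds the result down to a polynomial inequality. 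You instead observe that $1+a\,m(a)=\bbE\bigl[W_a^2\mid W_a\geq 0\bigr]$ for $W_a\sim\mathcal{N}(a,1)$ — the moment computations you cite are correct — and obtain monotonicity in $a$ from the monotone-likelihood-ratio ordering of the truncated normals, since $w\mapsto w^2$ is non-decreasing on the common support $[0,\infty)$. Your argument is self-contained, uniform in $a$ (no case split), and delivers strict monotonicity for free; it also explains \emph{why} the delicate cancellation as $a\to-\infty$ that you flag is harmless, whereas the paper's analytic bounds have to be sharp enough to survive it. The only thing the paper's route buys in exchange is explicit quantitative control of the quantity $\psi(z)$, which is not needed for the sign statement. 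One cosmetic remark: you differentiate with respect to $r=\sigu/\sqrt{\sigv^2+\sigu^2}$ and verify $\partial r/\partial\sigu>0$ explicitly, which is slightly cleaner than the paper's implicit passage from $\partial/\partial\sigs$ to $\partial/\partial\sigu$.
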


\begin{theorem}
\label{theorem:sign_half_normal}
For the normal-half normal model defined by \eqref{eq:model_specification} and \eqref{eq:half_normal_density}
the marginal effect of  $\sigu$ on  $TE=E(e^{-u}|\varepsilon)$ is non-positive. 
That is, if  $\sigu$ increases, $TE$ decreases:
\[
		\frac{\partial \bbE( e^{-u}|\varepsilon)}{\partial \sigu}\leqslant 0.
\]
\end{theorem}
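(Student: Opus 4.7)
My plan is to sidestep the closed-form expression \eqref{eq_te_hn} and the chain rule through $\mu_*(\sigma_u)$ and $\sigs(\sigma_u)$, working instead with the Bayesian representation of the conditional density. By Bayes' rule, for $u \geq 0$,
\[
f(u \mid \varepsilon) \propto f_U(u)\, f_V(\varepsilon + u),
\]
where $f_U$ is the half-normal density with scale $\sigma_u$ and $f_V$ is the $\mathcal{N}(0, \sigv^2)$ density. Setting $N = \int_0^\infty e^{-u} f_U(u) f_V(\varepsilon + u)\, du$ and $D = \int_0^\infty f_U(u) f_V(\varepsilon + u)\, du$, the quantity of interest is simply $TE = N/D$.

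The key observation is that $\sigma_u$ enters the integrand only through $f_U$, and $\partial \log f_U(u)/\partial \sigma_u = u^2/\sigma_u^3 - 1/\sigma_u$. Applying the quotient rule to $N/D$ and noting that the $u$-independent contribution $-1/\sigma_u$ cancels between the two resulting terms, I obtain
\[
\frac{\partial TE}{\partial \sigma_u} = \frac{1}{\sigma_u^3}\Bigl[\bbE(e^{-u} u^2 \mid \varepsilon) - \bbE(e^{-u}\mid\varepsilon)\,\bbE(u^2\mid\varepsilon)\Bigr] = \frac{1}{\sigma_u^3}\,\mathrm{Cov}(e^{-u}, u^2 \mid \varepsilon).
\]
Interchange of differentiation and integration is justified by dominated convergence thanks to the Gaussian tails of $f_U f_V$. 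To close the argument, I invoke the standard anti-monotonicity inequality: $u \mapsto e^{-u}$ is strictly decreasing and $u \mapsto u^2$ is strictly increasing on $[0,\infty)$, so taking two i.i.d.\ copies $U, U'$ of $u \mid \varepsilon$ we have $(e^{-U} - e^{-U'})(U^2 - (U')^2) \leq 0$ almost surely, and taking expectations yields $\mathrm{Cov}(e^{-u}, u^2 \mid \varepsilon) \leq 0$. Hence $\partial TE/\partial \sigma_u \leq 0$.

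The main obstacle, in my view, is conceptual rather than technical: one must recognise that $\sigma_u$ enters the model as a pure scale parameter of $u$, and that the algebraic entanglement of $\sigma_u$ and $\sigv$ in \eqref{eq_mu_hn}--\eqref{eq_sgm_hn} is merely an artefact of normalising the conditional density. Once this is seen, the same covariance template also proves Theorems~\ref{theorem:sign_exponential_e}, \ref{theorem:sign_exponential}, and \ref{theorem:sign_half_normal_e} uniformly: for the inefficiency statements one replaces $e^{-u}$ by $u$, producing a non-negative covariance of two co-monotone functions of $u$; for the exponential specification the relevant log-derivative is $u/\sigma_u^2 - 1/\sigma_u$, leading to $\mathrm{Cov}(e^{-u}, u \mid \varepsilon)$, again non-positive by the same anti-monotonicity argument.
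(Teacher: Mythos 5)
Your proof is correct, and it takes a genuinely different and considerably shorter route than the paper's. The paper works from the closed-form expression \eqref{eq_te_hn}, differentiates with respect to $\sigs$ (noting that $TE$ depends on $\sigu$ only through $\sigs$, which is increasing in $\sigu$), and reduces the claim to the scalar inequality $\lambda(z)+z\lambda'(z)+2z>0$ for the Mills-type ratio $\lambda=\phi/\Phi$; establishing this for $z<0$ then requires imported bounds on $\lambda$ (Gasull--Utzet, Baricz) and a case split into $(1,\infty)$, $(0.9,1]$, $(0.83,0.9]$, $(0.65,0.83]$, $(0.4,0.65]$, $[0,0.4]$ with hand-picked numerical constants. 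You instead differentiate under the integral sign using the score $\partial\log f_U/\partial\sigu=u^2/\sigu^3-1/\sigu$, observe that the location-independent term cancels in the quotient rule, and land on $\frac{\partial TE}{\partial\sigu}=\sigu^{-3}\,\mathrm{Cov}(e^{-u},u^2\mid\varepsilon)\le 0$ by the Chebyshev correlation inequality for anti-monotone functions on the support $[0,\infty)$. I checked the cancellation and the sign of the score term, and the dominated-convergence justification is routine given the Gaussian factors. What your approach buys is substantial: it avoids all Mills-ratio estimates and interval splitting, it makes the sign structurally transparent (a covariance of a decreasing and an increasing function of $u$), and, as you note, the same template with $e^{-u}$ replaced by $u$ and the exponential score $u/\sigu^2-1/\sigu$ yields Theorems~\ref{theorem:sign_exponential_e}, \ref{theorem:sign_exponential} and \ref{theorem:sign_half_normal_e} in one stroke (giving $\mathrm{Var}(u\mid\varepsilon)\ge 0$ and $\mathrm{Cov}(u,u^2\mid\varepsilon)\ge 0$ for the inefficiency statements). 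It also generalizes immediately to any one-parameter scale family for $u$ whose log-density is, for fixed $\sigu$, the sum of a $u$-free term and a term increasing in $u$ --- which the paper's coordinate-by-coordinate computation does not. What the paper's route buys in exchange is the explicit formula for the marginal effect (Statement~\ref{statement:te} and its half-normal analogue), which is needed elsewhere for computing, not just signing, the effects; your argument delivers the sign but not a usable closed form.
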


Thus taking into account \eqref{ne_ineff_eps}, \eqref{ne_eff_eps} and 
Theorems \ref{theorem:sign_exponential_e}-\ref{theorem:sign_half_normal} we conclude that for the normal-exponential model \eqref{eq:model_specification}, \eqref{eq:exp_density} as well as for the normal-half normal model \eqref{eq:model_specification}, \eqref{eq:half_normal_density} 
signs of marginal effects of  $z$ on $E(u|\varepsilon)$ and $TE=E(e^{-u}|\varepsilon)$  are opposite, i.e.,
\begin{equation}
\label{marginals_exponential}
\mathrm{sign} \frac{\partial \bbE(u|\varepsilon)}{\partial z} 
  = -\mathrm{sign} \frac{\partial \bbE(e^{-u}|\varepsilon)}{\partial z} .
  \end{equation}

Proofs of Theorems~\ref{theorem:sign_half_normal_e}-\ref{theorem:sign_half_normal} 
are given in Appendix~\ref{sec:proofs}.

\section{Discrete distribution of inefficiency error}
\label{sec:discrete_distr}

\subsection{Discrete model}

To come up with a counter-example of the above result, we now consider an example of a discrete distribution for $u > 0$ with the support that consists of two values $u_1$ and $u_2$: 

\begin{equation}
u = 
\begin{cases}
u_1, \,\, \text{with } P(u = u_1) =  p,\\
u_2, \,\, \text{with } P(u = u_2) =  1 - p,\\
\end{cases}
     \label{eq:discr2}
\end{equation}
with $u_1 > 0,  \, u_2 > 0,  \,  0 < p < 1$.

For the distribution of $u$ in \eqref{eq:discr2} we have:
\begin{align}
\bbE(u) &= u_1p+u_2(1-p), \nonumber \\
\bbV(u) &= \sigu^2 = p (1 - p) (u_1 - u_2)^2, \label{eq_dis_var}  \\
TE(u) &= \bbE(e^{-u}) = pe^{-u_1}+(1-p)e^{-u_2}.\label{eq_dis_te}
\end{align}

The proposed normal-discrete model is identifiable model, as our study in Appendix~\ref{sec:identifiability_check} shows.

In contrast to the exponential distribution \eqref{eq:exp_density} standard deviation $\sigu$ of this distribution depends on three parameters $u_1$, $u_2$, and $p$.


\subsection{Numerical experiments}

Use of this discrete distribution can result in unexpected behavior of $TI$ and $TE$ with an increase in $\sigu$ induced by an increase in $z$.

To show this we consider an example with the factor variable $z$, such that $9 \le z \le 17$ and
\begin{equation}
\label{example_discr}
\begin{cases}
     p &= 0.9 + 0.001 z, \\
     u_1 &= 0.1, \\ 
     u_2 &= 1 + 0.2 z.   
\end{cases}
\end{equation}
so that $\sigu(z)$ is an increasing function of $z$ (left pane of Fig. 1). But in the range  $10.5 \le z \le 17$ the behavior of $TI$ and $TE$ are "abnormal", see the right pane of Fig. 1. 
In this range both $TI$ and $TE$ are increasing functions of $\sigu$. 
The variance $\sigu$ is an increasing function of $z$. 
That is, an increase in $z$ causes an increase of $\sigu$ which causes a simultaneous increase of $TI$ and $TE$.

\begin{figure}
    \centering
    \begin{subfigure}[b]{0.42\textwidth}
        \includegraphics[width=\textwidth]{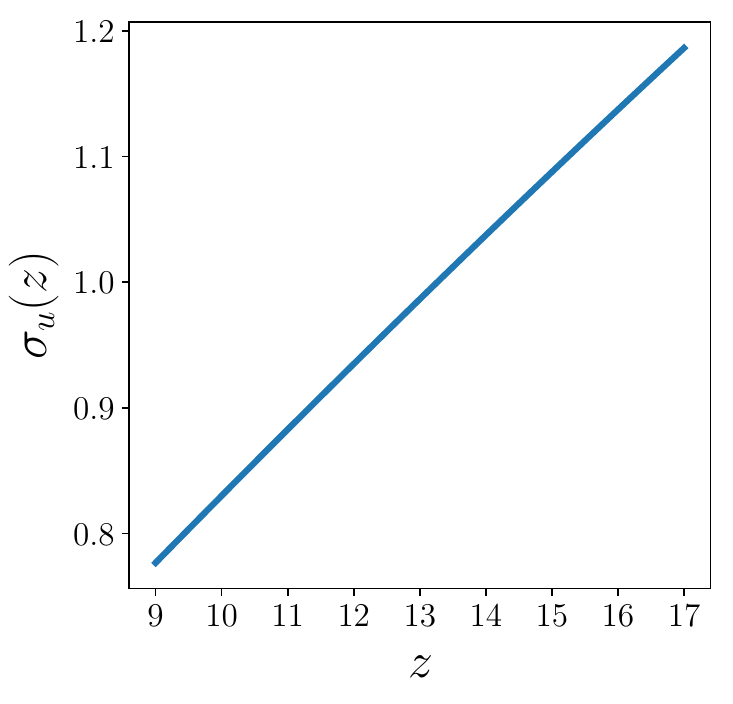}
        \caption{Variance $\sigu$ is an increasing function of $z$ for the considered normal-discrete model}
        \label{fig:discrete_sigu_vs_z}
    \end{subfigure}
    ~ 
    \begin{subfigure}[b]{0.485\textwidth}
        \includegraphics[width=\textwidth]{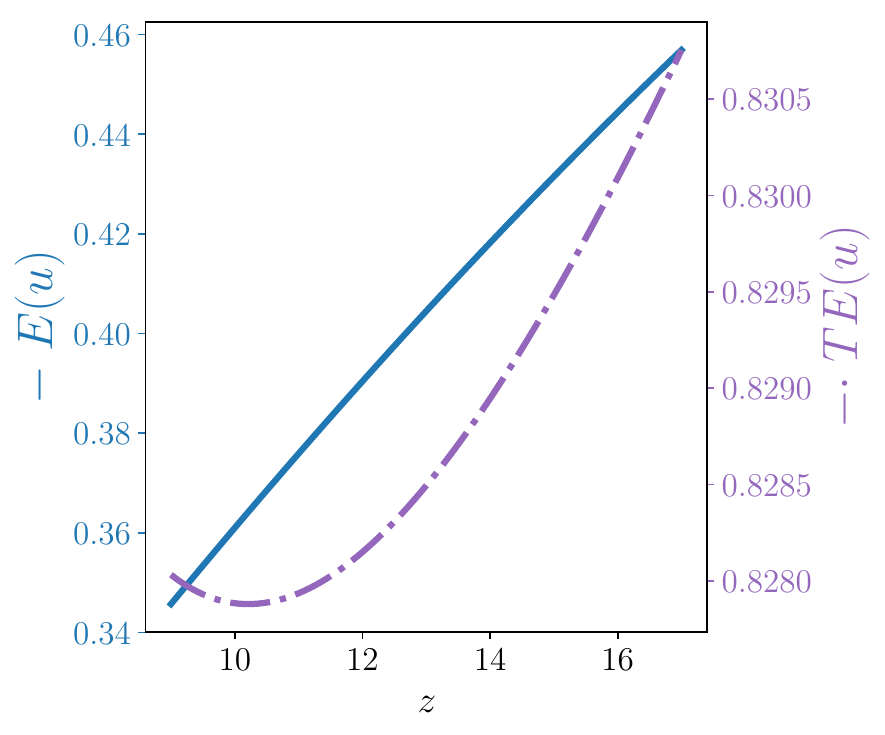}
        \caption{$\bbE(u)$ and $TE = \bbE (e^{-u})$ 
        	in the range  $10.5 \le z \le 17$ 
        	are both monotonically increasing function of $z$ and thus of~$\sigu$}
        \label{fig:discrete_te_mean_u_vs_z}
    \end{subfigure}
    \caption{Unusual behaviour of the discrete normal model}\label{fig:unusual_normal_discrete}
\end{figure}


Thus if in reality the distribution of $u$ is discrete as in \eqref{example_discr}, that is, $u$ is generated from the discrete distribution and $v$ is generated from a normal distribution so that the model is a normal-discrete model specified by \eqref{eq:model_specification} and \eqref{eq:discr2}, and one applies the normal-exponential model \eqref{eq:model_specification} and \eqref{eq:exp_density},
 the estimates are likely to suffer from model misspecification.
Use of the normal-exponential model according to
\eqref{ne_ineff_eps}, \eqref{ne_eff_eps} an increase in $z$ causes a decrease of $TI$, while the real situation is the opposite.

\subsection{Discrete distribution. Mean $TE$}
\label{discrete_1}
	
To illustrate the aforementioned problem we run simulations with the following specifications.
We choose the sample size $N = 1000$. 
The single input $x_i$ is generated from an uniform distribution defined for the interval $[2, 7]$. 
The noise term $v_i \sim N(0, 0.25)$. A single variable $z_i$ comes from an uniform distribution defined in the interval $[9, 17]$. 
The parameters of the discrete distribution of $u$ in \eqref{eq:discr2} are: $u_{i, 1} = 0.1$; $u_{i, 2} = 1 + 0.2 z_i$; $p_i = 0.9 + 0.001 z_i$. 
To simulate $u_i$, we also define an uniformly distributed random variable $r_i \sim U[0, 1]$ for each $i$. We then
assign $u_i = u_{i, 1}$ if $r_i < p_i$ and $u_i = u_{i, 2}$ otherwise. 
Finally we generate output $y_i$ according to $y_i = 1 + x_i + v_i - u_i$.

Using the generated data we estimated parameters of normal-exponential model \eqref{eq:model_specification} and \eqref{eq:exp_density}
with the following specification for $\sigu(z)$, viz., $\ln \sigu (z_i)= \gamma_0 + \gamma z_i$,
and 
obtained
\[
\hat \sigma_{u_i} = \exp(-0.618 + 0.025 z_i).
\]
We used this estimate of $\sigu(z_i)$ to get estimate of $TE$ using \eqref{eq_te_exp-u}, i.e.,
$\widehat {TE}_i = {1}/({1 + \hat \sigma_{ui}})$.

Plot of  true $\sigma_{u_i}$ calculated using \eqref{eq_dis_var} and estimated $\hat \sigma_{u_i}$ on $z$ is presented in Figure~\ref{fig:Graph_sigma}. Similarly, plot of true $TE_i$ calculate using \eqref{eq_dis_te} and estimated $\widehat {TE}_i$ on $z$ is presented in Figure \ref{fig:Graph_te}. It can be seen from the figures that while $\hat \sigu$ increases with $z$, like $\sigu$,  true $TE$  and the estimate of $TE$ move in opposite directions. In this case 
the model misspecification leads to the wrong conclusion of the negative effect of $z$ on $TE$.
 
\begin{figure}[h]
	\centering
	\includegraphics[width=0.6\linewidth]{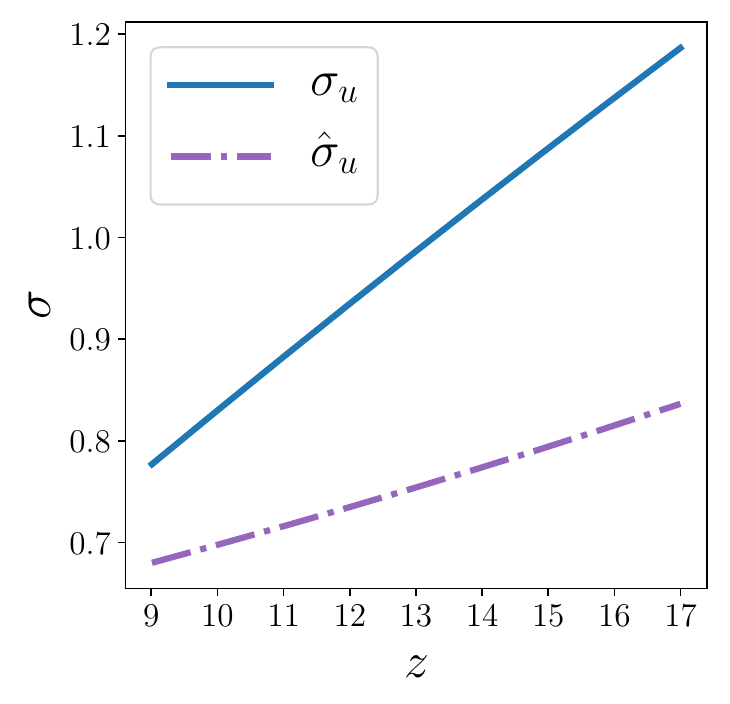}
	\caption{$\sigu$ and $\hat \sigu$ behave in a similar way for the normal-discrete model}
	\label{fig:Graph_sigma}
\end{figure}
\begin{figure}[h]
	\centering
	\includegraphics[width=0.6\linewidth]{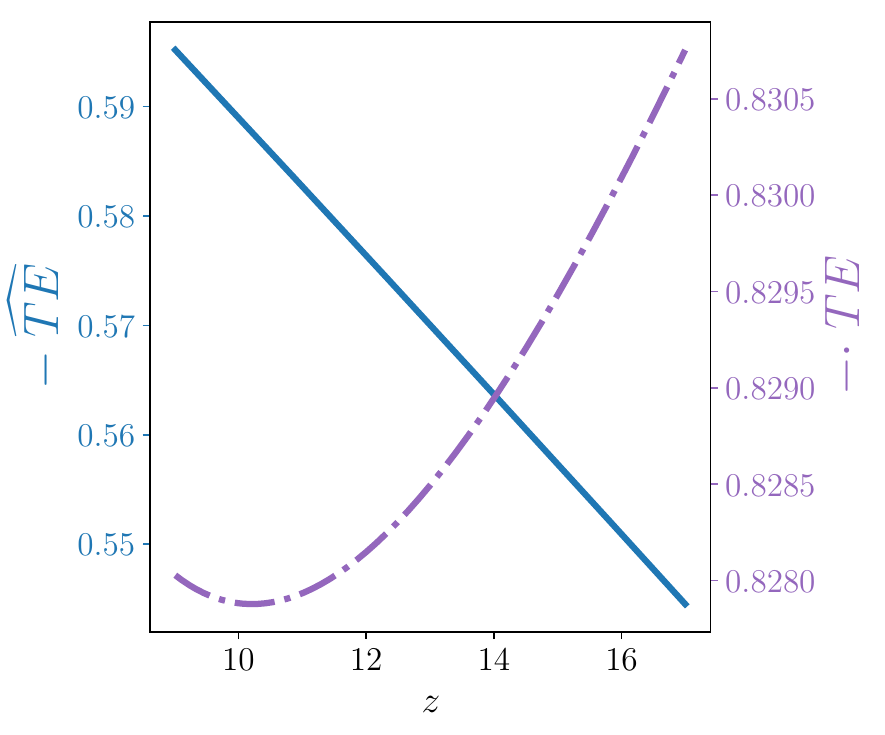}
	\caption{$TE$ and $\widehat {TE}$ as function of $z$ behave in a different way for the normal-discrete model}	
	\label{fig:Graph_te}
\end{figure}

\subsection{Discrete distribution. Observation-specific $TE$}
\label{discrete_2}


We continue with the discrete case to provide another counter-example when $TE$ is estimated from the conditional mean.
For this we consider a discrete random variable $u > 0$ 
which takes values $u_i = z \, u_{i0}$, $i=1, 2$ with probabilities $p_1, p_2$, such that $p_1 + p_2 = 1$, and $u_{i0} > 0$,  $i = 1, 2$, 
 $z>0$.

\begin{equation}
\label{u_random_discr3}
P(u_i = z u_{i0}) = p_i,  \,\,\, i = 1, 2.
\end{equation}

Variance of  $u_i$ depends on  $z$, i.e., 

\begin{align}
\sigma^2_{ui} &= z^2 p_1 p_2 (u_{10} - u_{20})^2 = z^2 c^2,  \,\,\,\, c > 0, 	\label{variance_discr3}
\end{align}
where $c = p_1 p_2 (u_{10} - u_{20})$.
Thus 
\begin{equation}
\label{std_discr3}
\sigu = z\,c,  \,\, \text{and}   \,\, 
\frac{\partial \sigu}{\partial z} = c > 0,
\end{equation}

\begin{statement}
\label{statement:effect_te_sign}
Consider the SF model \eqref{eq:model_specification} with  $v_i\sim \mathcal{N}(0, \sigv^2)$ and a one-parameter distribution for $u$ in \eqref{u_random_discr3}.
Then the sign of the marginal effect of  $z$ 
on $TE$ defined as $TE=E(e^{-u}|\varepsilon)$ 
is:
\begin{align*}
\label{marginal_TE_discr3}
\frac{\partial TE}{\partial z} 
&= \frac{\partial \bbE (e^{-u}|\varepsilon)}{\partial z}  \\
&=-\frac{1}{\sum\nolimits_{i=1}^{2}p_i e^{-w_i}}
\sum\nolimits_{i=1}^{2}p_i e^{-z\, u_{i0}}e^{-w_i} (u_{i0}+w'_i) \\
&+ \frac{1}{ \left( \sum\nolimits_{i=1}^{2}p_i e^{-w_i} \right)^2 }
\left(  \sum\nolimits_{i=1}^{2}p_i e^{-z\, u_{i0}}e^{-w_i} \right)
\left( \sum\nolimits_{i=1}^{2}p_i e^{-w_i}w'_i \right),
\end{align*}
where $w_i=\frac{(z\, u_{i0}+\varepsilon)^2}{2\sigv^2}$ and 
$w'_i=\frac{\partial}{\partial z}w_i =\frac{z\, u_{i0}^2 +\varepsilon\,u_{i0}}{\sigv^2}$.
\end{statement}
The proof is presented in the Appendix.

Note that the marginal effect of $z$ on $TE$ 
in the normal-exponential model is negative if  
$\frac{\partial \sigu}{\partial z} > 0$ (see Theorem~\ref{theorem:sign_exponential}). However, in the normal-discrete model, the sign of the marginal effect of $z$ depends on value of $\varepsilon$. That is, the value of the marginal effect as well its sign depends on the value of $\varepsilon$. 

We illustrate this with the plot of
$\frac{\partial TE}{\partial z} $ against $\varepsilon$ for these values of the model 
parameters:  $\, z=8.5;\, \sigv=1;\, u_1=0.1; \,
	u_2=0.89;\, p_1=0.99;\, p_2=0.01$

\begin{figure}[H]
	\centering
	\includegraphics[width=0.6\linewidth]{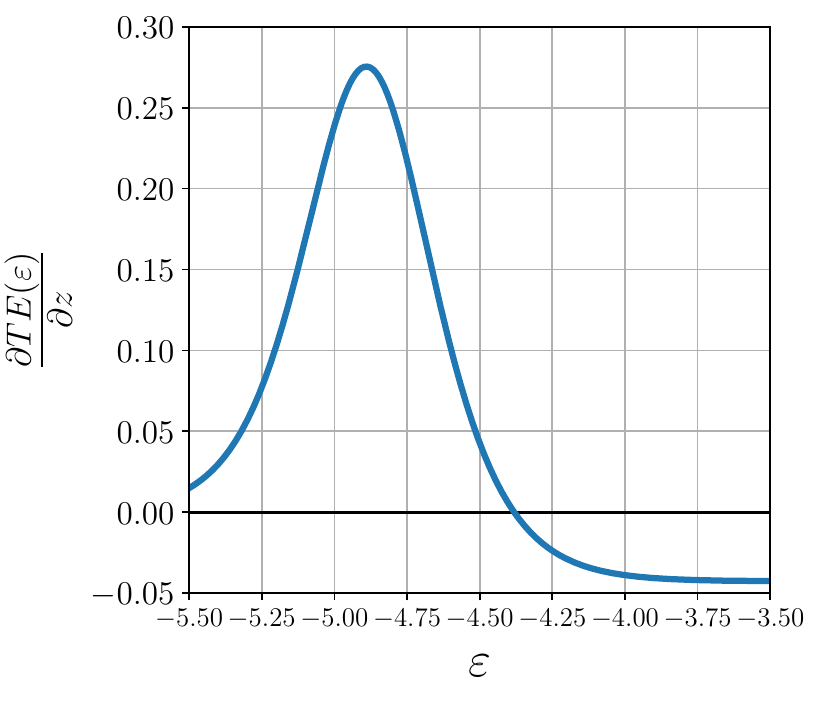}
	\caption {Marginal effect $\frac{\partial TE}{\partial z}$  
		as function of $\varepsilon$
	for $z=8.5;\, \sigv=1;\, u_1=0.1; \,
	u_2=0.89;\, p_1=0.99;\, p_2=0.01$}	
	\label{fig:scalar2}
\end{figure} 

From Figure~\ref{fig:scalar2} one can see that if the normal-discrete model is the true model, then the sign of the marginal effect may vary across observations.  
But 
for the normal-exponential model the marginal effect 
 is always negative if  $\frac{\partial \sigu}{\partial z} > 0$.
Thus if the normal-exponential model is used, where the true model is normal-discrete, one can come to the wrong conclusion regarding the sign of the marginal effect.

Sometimes the focus is not on the individual values of $TE$ but their rankings. To examine how the true values of $TE$ are related to their estimated counterparts for the simulated model, we consider the following simulations.
We used $N=1000$, generated input $x_i$ from a uniformly distributed random variable in the interval $[2, 2.3]$.  The noise term 
is 
generated from $v_i \sim N(0,\,1)$. 
The $z_i$ variable 
is 
generated from a uniformly distributed random variable in the interval $[8, 9.4]$. 
The parameters of the distribution of the discrete distribution of $u$ are chosen as: $p_1=0.8, p_2=0.2; u_{(1)}=0.1, u_{(2)}=0.89$. We also generated a variable $r_i$ which is 
uniformly distributed in the interval $[0, 1]$. Then we generated $u_{i0}=u_{(2)}$ if $r_i<p_2$ and $u_{i0}=u_{(1)}$ otherwise, and assume $u_i = z_i \, u_{i0}$. 
Finally we generated output $y_i$ as:
 $y_i = 1 + x_i + v_i - u_i$.

We used these data to estimate the parameters of the normal-exponential model \eqref{eq:model_specification}--\eqref{eq:exp_density}
with the following specification of $\sigu$:
 $\ln \sigma_{ui} = \gamma_0 + \gamma z_i$,
and obtained 
the estimates of the observation specific technical efficiencies 
$\widehat{TE}_i$. 

For each $i$ true $\widehat{TE}_i$ was calculated as
\begin{equation}
TE_i = \bbE(e^{-u}|\varepsilon_i)
     = \frac{ \left( \sum_{i=1}^{k}p_i e^{-z_i\, u_i}e^{-w_i} \right) }
       {\left( \sum_{i=1}^{k}p_i e^{-w_i} \right) },
\end{equation}
where $w_i =\frac{(z_i\, u_i + \varepsilon_i)^2} {2\sigv^2}$.

A scatter plot of the estimated TE,  $\widehat{TE}_i$ against true $TE_i$
provided in Figure \ref{fig:hat_TE_against_TE}. 
It can be seen that in case of positive true marginal effect $TE$ we get confusing values as estimates,
while modeling capability of normal-exponential model is better if the signs of marginal effects coincide.

\begin{figure}
    \centering
    \begin{subfigure}[b]{0.485\textwidth}
        \includegraphics[width=\textwidth]{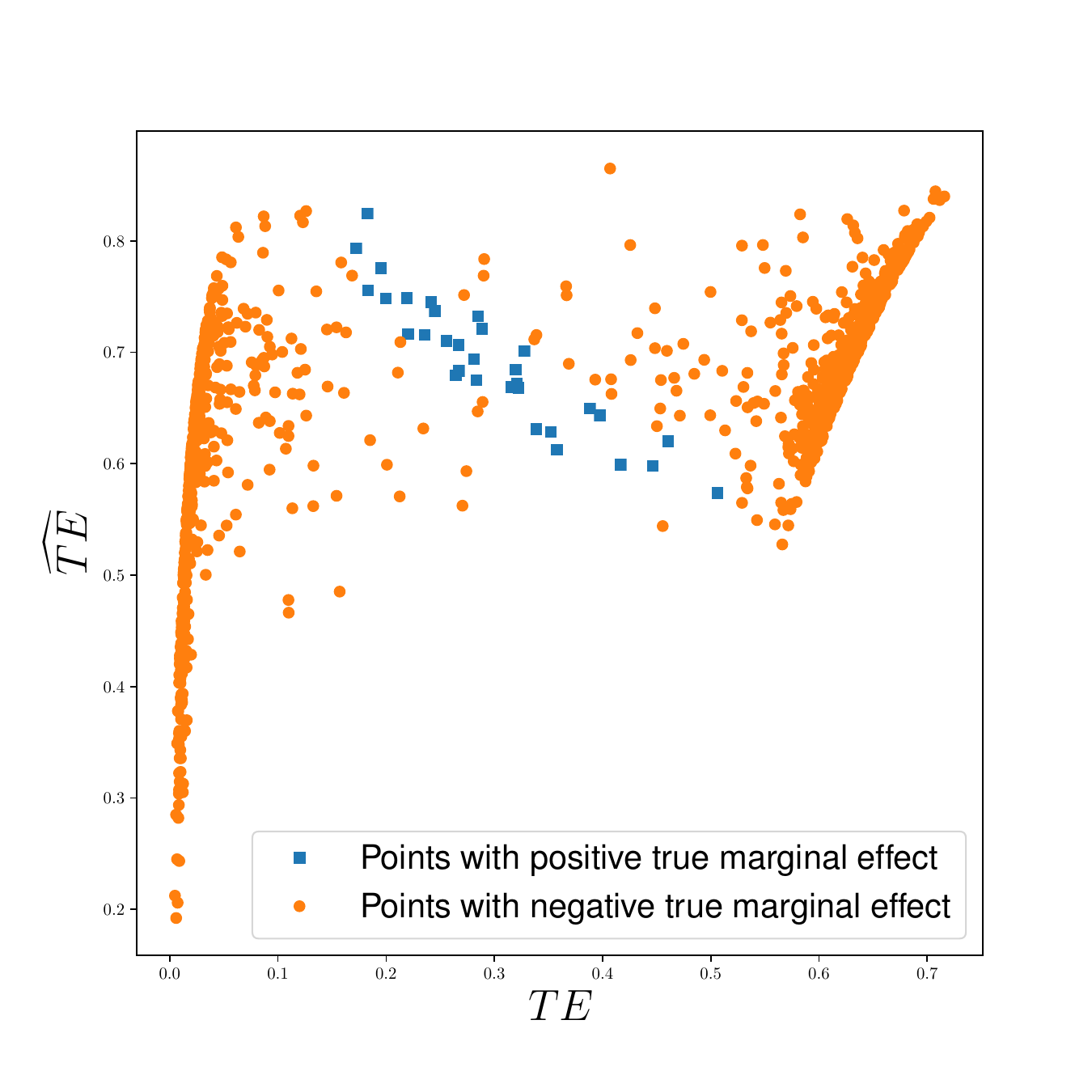}
        \caption{Comparison of all estimates of $\widehat{TE}$ and true $TE$}
    \end{subfigure}
    ~ 
    \begin{subfigure}[b]{0.485\textwidth}
        \includegraphics[width=\textwidth]{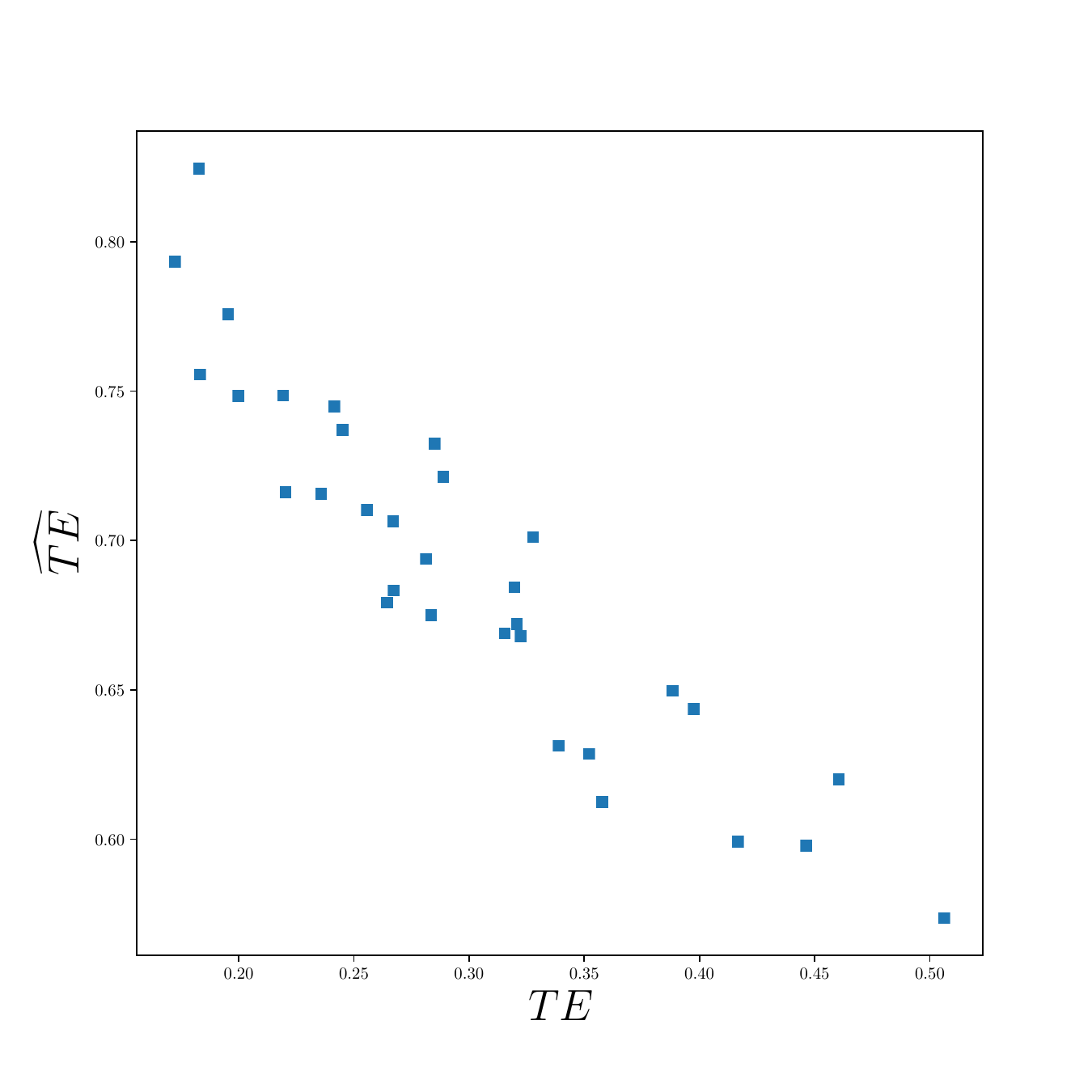}
        \caption{Selected points with $\varepsilon$ between $-2.3$ and $-2.1$}
    \end{subfigure}
    \caption{Scatter plot of $\widehat{TE}$ and true $TE$}	
    \label{fig:hat_TE_against_TE}
\end{figure}

\section{Conclusions and discussions}
\label{sec:conlusions}

In this paper we derived the formula for computing the marginal effects of determinants of inefficiency ($z$) on both the unconditional and conditional means of technical inefficiency and efficiency for the normal-exponential stochastic frontier model. We proved that for the normal-exponential model the signs of the marginal effects of $z$ on the technical inefficiency and technical efficiency are of opposite sign.

We considered an example of discrete distribution for technical inefficiency and showed that the relationship between the true and estimated technical efficiency for the normal-discrete model can be substantially different from the normal-exponential model, at least for some values of $z$.
This results illustrates that if the real world data on noise and inefficiency comes from a  normal and a discrete distribution and a researcher estimates the model assuming that the errors are normal and exponential instead, results on estimated efficiency, its marginal effect and rankings might be all wrong. That is, the consequence of misspecification of inefficiency distribution can be quite serious.

\section*{Acknowledgments}
We are grateful for the invaluable comments provided by participants at the 
Sixteenth European Workshop on Efficiency and Productivity Analysis in London, 2019.

\appendix

\section{Proofs} 
\label{sec:proofs}

\subsection{Proof of Theorems~\ref{theorem:sign_exponential_e} and~\ref{theorem:sign_exponential}}

First we reproduce a proof of the Lemma~\ref{Lemma0} from~(\cite{sampford1953some}):
\begin{lemma}\label{Lemma0}
Let $\phi(z)$ and $\Phi(z)$ be the probability density function and the cumulative density function of the standard normal distribution $\mathcal{N}(0, 1)$,
and $\lambda(z) = \frac{\phi(z)}{\Phi(z)}$. 
Then it holds:
\begin{enumerate}
\item $1 - z\lambda(z) - \lambda(z)^2 \ge 0$
\item $\lambda(z)$ is a decreasing function and its derivative 
$\lambda'(z) \in (-1, 0)$.
\end{enumerate}
\end{lemma}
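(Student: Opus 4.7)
The plan is to compute $\lambda'(z)$ explicitly, observe that the expression in part~(1) is exactly $1+\lambda'(z)$, and then handle the two sides of the bracket $(-1,0)$ in part~(2) separately. By the quotient rule combined with $\phi'(z) = -z\phi(z)$,
\[
\lambda'(z) = \frac{-z\phi(z)\Phi(z) - \phi(z)^2}{\Phi(z)^2} = -z\lambda(z) - \lambda(z)^2,
\]
so $1 - z\lambda(z) - \lambda(z)^2 = 1 + \lambda'(z)$; the entire lemma reduces to proving $\lambda'(z) \ge -1$ (part~(1)) together with $\lambda'(z) < 0$ (upper end of part~(2)), the strict lower bound in part~(2) being a slight strengthening of part~(1).

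For part~(1) I would invoke the probabilistic interpretation of $\lambda$ as an inverse Mills ratio: for $X \sim \mathcal{N}(0,1)$ the standard truncated-normal formulas give
\[
\bbE(X \mid X > -z) = \frac{\phi(-z)}{1-\Phi(-z)} = \lambda(z), \qquad \bbV(X \mid X > -z) = 1 - z\lambda(z) - \lambda(z)^2,
\]
so the expression in part~(1) is a variance, hence $\ge 0$. Strict positivity follows because the truncated distribution is not a point mass, which also delivers the strict lower bound $\lambda'(z) > -1$ required by part~(2).

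For the upper bound $\lambda'(z) < 0$ in part~(2), I would factor $\lambda'(z) = -\lambda(z)(z + \lambda(z))$. Since $\lambda(z) > 0$, it suffices to show $g(z) := \phi(z) + z\Phi(z) > 0$. A direct differentiation gives $g'(z) = \Phi(z) > 0$, so $g$ is strictly increasing, and the Mills-ratio asymptotic $\Phi(z) \sim \phi(z)/|z|$ as $z \to -\infty$ forces $g(z) \to 0$ at $-\infty$; hence $g > 0$ throughout, and $\lambda'(z) < 0$ follows.

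The only step that needs genuine care is the limit $g(z) \to 0$ at $-\infty$, which rests on the standard Mills-ratio asymptotic for $\Phi$; everything else is either a quotient-rule computation or the elementary fact that a variance is non-negative. A purely analytic alternative to the truncated-normal reading, namely differentiating $\Phi(z)^2 - z\phi(z)\Phi(z) - \phi(z)^2$ directly and tracking its monotonicity and its limits at $\pm\infty$, is feasible but noticeably messier than the probabilistic route.
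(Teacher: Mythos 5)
Your proof is correct, and its treatment of part (1) is essentially the paper's own argument in mirror image: the paper realizes $1 - z\lambda(z) - \lambda(z)^2$ as the variance of a standard normal truncated \emph{above} at $z$ (whose density is $\phi(t)/\Phi(z)$ on $(-\infty,z)$), while you use truncation \emph{below} at $-z$; by symmetry these are the same random variable up to sign, so the computations coincide. Where you genuinely diverge is the upper bound $\lambda'(z)<0$. The paper writes $\lambda'(z)=\bbV(X)-1$ and then simply asserts $-1\le\lambda'(z)\le 0$; the inequality $\bbV(X)\le 1$ (truncation does not increase the variance) is true but is nowhere established there, so your factorization $\lambda'(z)=-\lambda(z)\bigl(z+\lambda(z)\bigr)$ together with the monotonicity of $g(z)=\phi(z)+z\Phi(z)$ (note $g'(z)=\Phi(z)>0$ and $g(z)\to 0$ as $z\to-\infty$ by the Mills-ratio asymptotic, or more directly $g(z)=\int_{-\infty}^{z}\Phi(t)\,dt>0$) actually closes a gap rather than merely restating the paper. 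You are also more careful about strictness: the lemma claims the open interval $(-1,0)$, and your observation that the truncated distribution is not degenerate gives $\bbV(X)>0$, hence $\lambda'(z)>-1$ strictly, which the paper's $\le$ chain does not quite deliver. The only caveat is cosmetic: the strict positivity of the variance, and the positivity of $z+\lambda(z)$, are each one line, so it is worth writing those lines out rather than gesturing at them.
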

\begin{proof}
Obviously $f(t) = \frac{\phi(t)}{\Phi(z)}= \frac{\phi(t)}{P(Z \le z)}$ is a probability density function of a random variable $X$ defined at the interval $(-\infty, z)$. 
\begin{align*}
\bbE(X) &= \int\limits_{-\infty}^z t\frac{\phi(t)}{\Phi(z)}dt 
     = \frac{1}{\Phi(z)} \int\limits_{-\infty}^z t\phi(t)dt
     = -\frac{1}{\Phi(z)} \int\limits_{-\infty}^z \phi'(t)dt
     = -\frac{\phi(z)}{\Phi(z)}
     =- \lambda(z), \\
\bbE(X^2) &=   \int\limits_{-\infty}^z t^2\frac{\phi(t)}{\Phi(z)}dt
     = \frac{1}{\Phi(z)} \int\limits_{-\infty}^z t^2\phi(t)dt
     = -\frac{1}{\Phi(z)} \int\limits_{-\infty}^z t\phi'(t)dt = \\
     &= -\frac{1}{\Phi(z)} 
       \biggl( \Bigl. t\phi(t) \Bigr|_{-\infty}^z -  \int\limits_{-\infty}^z \phi(t)dt \biggr)
     = -\frac{1}{\Phi(z)} \bigl(  z\phi(z) - \Phi(z) \bigr)
     = 1-z\lambda(z).
\end{align*}
Hence, the variance is
\[
\bbV(X) = 1-z\lambda(z)-(-\lambda(z))^2 = 1 - z\lambda(z) - \lambda(z)^2 \ge 0.
\]
Since 
\begin{align*}
\lambda'(z) &= \left( \frac{\phi(z)}{\Phi(z)} \right)'  
  = \frac{1}{\Phi(z)^2} \bigl(\phi(z)'\Phi(z) - \phi(z)\Phi(z)'\bigr) = -z\lambda(z)-\lambda(z)^2\\
  & = \bbV(X)- 1,
\end{align*}   
we have $-1 \le \lambda'(z) \le 0 $.
\end{proof}

\subsubsection{Proof of Theorem~\ref{theorem:sign_exponential_e}}

\begin{proof}
From Statement~\ref{statement:e} we have 
\begin{align}
\frac{\partial \bbE(u|\varepsilon)}{\partial\sigu}
 &= \frac{\sigv^2}{\sigu^2}\frac{\Phi^2(z)-\phi^2(z)-z\phi(z)\Phi(z)}{\Phi^2(z)}
 = \frac{\sigv^2}{\sigu^2} \bigl( 1 -z\lambda(z) -z\lambda(z)^2 \bigr), \nonumber
 \end{align}
which is non-negative by Lemma~\ref{Lemma0}.
\end{proof}

\subsubsection{Proof of Theorem~\ref{theorem:sign_exponential}}

\begin{proof}
	From Statement~\ref{statement:te} we have 
\begin{align}
\frac{\partial TE}{\partial\sigu} &= \frac{\sigv}{\sigu^2}\cdot
	\frac{\exp\left(-z \sigv + \frac{\sigv^2}{2}\right)}	{\Phi^2(z)} 
     \Phi(z)\Phi(z-\sigv)\bigl(-\sigv 
	+ \lambda(z-\sigv) - \lambda(z)\bigr).  \label{th2_pr1}
\end{align}
Since the first factors in \eqref{th2_pr1} and $\sigv$ are greater or equal to $0$, it is enough to prove that
\[
 f(t)= -t+ \lambda(z - t) - \lambda(z) \le 0 \text{\,  for all  \,} t \ge 0. 
\]
We have $f(0) = 0$, and $f'(t) = -1-\lambda'(z-t) \le 0$ since $-1 \le \lambda(t) \le 0$ for all $t$ (Lemma~\ref{Lemma0}).
Thus $f(t) \le 0$, and Theorem~\ref{theorem:sign_exponential} is proven.
\end{proof}

\subsection{Proof of Theorems~\ref{theorem:sign_half_normal_e} and~\ref{theorem:sign_half_normal}}

\begin{statement}
\label{statement:4}
For $\lambda(z) = \frac{\phi(z)}{\Phi(z)}$ it holds that:
\begin{equation}
\label{eq:lambdaz}
2 \lambda^2(z) > 1 - z^2 - 3 z \lambda(z)  \text{ for } z < 0.
\end{equation}
\end{statement}

\begin{proof}
According to the proof of Theorem 9 in (\cite{gasull2014approximating}) we get that:
\[
2 + x^2 a^2(x) - a^2(x) - 3 x a(x) > 0 \text{ for } x > 0,
\]
where
\[
a(x) = \frac{1 - \Phi(x)}{\phi(x)} = \frac{1}{\lambda(-x)}.
\]

So,
\[
2 + x^2 \frac{1}{\lambda^2(-x)} - \frac{1}{\lambda^2(-x)} - 3 x \frac{1}{\lambda(-x)} > 0 \text{ for } x > 0. 
\]
By the change of variable $z = -x$ we get:
\[
2 + z^2 \frac{1}{\lambda^2(z)} - \frac{1}{\lambda^2(z)} + 3 z \frac{1}{\lambda(z)} > 0 \text{ for } z < 0. 
\]
Moving $\frac{1}{\lambda^2(z)}$ we obtain the following inequality:
\[
\frac{1}{\lambda^2(z)} \left[2 \lambda^2(z) +z^2 - 1 + 3 z \lambda(z) \right] > 0 \text{ for } z < 0.
\]
As $\lambda^2(z) > 0$, this inequality is equivalent to:
\[
2 \lambda^2(z) + z^2 - 1 + 3 z \lambda(z) > 0 \text{ for } z < 0.
\]
Moving two terms to the right side of the inequality we get the statement:
\[
2 \lambda^2(z) > 1 - z^2 - 3 z \lambda(z)  \text{ for } z < 0.
\]
\end{proof}

\paragraph{Proof of Theorem~\ref{theorem:sign_half_normal_e}}

\begin{proof}

Denote by $A = \frac{\mu_*}{\sigs}$.
As $A$ we have:
\begin{align*}
A &= -\varepsilon \frac{\sigu^2}{\sigu^2 + \sigv^2} \cdot \frac{\sqrt{\sigu^2 + \sigv^2}}{\sigu \sigv} = -\varepsilon \frac{\sigu}{\sigv} \frac{1}{\sqrt{\sigu^2 + \sigv^2}} = \\
&= -\varepsilon \frac{1}{\sigv^2} \frac{\sigu \sigv}{\sqrt{\sigu^2 + \sigv^2}} = -\varepsilon \frac{\sigs}{\sigv^2}.
\end{align*}

Using this notation we get:
\[
\bbE(u | \varepsilon) = \sigs \frac{\phi(A)}{\Phi(A)} + \sigs A = \sigs \left[\frac{\phi(A)}{\Phi(A)} + A \right].
\]

The desired partial derivative has the form:
\begin{align*}
    &\frac{\partial}{\partial \sigs} \bbE(u | \varepsilon) =
    \frac{\partial}{\partial \sigs} \left[\sigs \left(\frac{\phi(A)}{\Phi(A)} + A \right) \right] = 
    \frac{\partial}{\partial \sigs} \left[\sigs (\lambda(A) + A) \right] = \\
    &= \lambda(A) + A + \sigs (\lambda'(A) + 1) \frac{\partial A}{\partial \sigs} = 
    \lambda(A) + A + (1 + \lambda'(A)) \sigs \left( \frac{-\varepsilon}{\sigv^2} \right) = \\
    &= \lambda(A) + A + (1 + \lambda'(A)) A = \lambda(A) + 2 A + A \lambda'(A) = \\
    &= \frac{\phi(A) \Phi(A) + 2 A \Phi^2(A) + A (-A \phi(A) \Phi(A) - \phi^2(A)}{\Phi^2(A)} = \\
    &= \frac{1}{\Phi^2(A)} \left(\phi(A) \Phi(A) + 2 A \Phi^2(A) - A^2 \phi(A) \Phi(A) - A \phi^2(A) \right),
\end{align*}
as
\begin{align*}
\lambda'(z) &= \frac{\partial}{\partial z} \frac{\phi(z)}{\Phi(z)} = \frac{\phi'(z) \Phi(z) - \phi(z) \Phi'(z)}{\Phi^2(z)} = \frac{-z \phi(z) \Phi(z) - \phi^2(z)}{\Phi^2(z)} \\
&= -z\lambda(z) - \lambda^2(z),
\end{align*}
and
\[
\frac{\partial A}{\partial \sigs} = \frac{\partial}{\partial \sigs} \left(- \varepsilon \frac{\sigs}{\sigv^2} \right)
= -\frac{\varepsilon}{\sigv^2}. 
\]

So, to prove the theorem it is sufficient to prove that
\[
\forall z,\, \psi(z) = \phi(z) \Phi(z) + 2 z \Phi^2(z) - z^2 \phi(z) \Phi(z) - z \phi^2(z) > 0.
\]
It is equivalent to 
\begin{equation}
\label{eq:main_th}
\lambda(z) + 2z - z^2\lambda(z) - z\lambda^2(z) > 0.
\end{equation}

We start with the case $z < 0$.

Multiplying the inequality by $2$ we get an equivalent inequality:
\[
2 \lambda(z) + 4 z - 2 z^2\lambda(z) - 2 z\lambda^2(z) > 0.
\]

From \eqref{eq:lambdaz} in Statement~\ref{statement:4} above:
\begin{align*}
2 \lambda(z) &+ 4 z - 2 z^2 \lambda(z) - 2 z \lambda^2(z) >
2 \lambda(z)  + 4 z - 2 z^2 \lambda(z) - z \left(1 - z^2 - 3 z \lambda(z)\right) \\ 
&= 2 \lambda(z) + 4 z - 2 z^2 \lambda(z) - z + z^3 + 3 z^2 \lambda(z) \\
&= 2 \lambda(z) + 3 z + z^2 \lambda(z) + z^3 = (2 + z^2) \lambda(z) + 3 z + z^3.  
\end{align*}

So, it is sufficient to prove, that for $z < 0$:
\begin{equation}
\label{eq:lambda_z_2}
(2 + z^2) \lambda(z) + 3 z + z^3 > 0.    
\end{equation}

From~(\cite{baricz2008mills}) we get that the following inequality holds:
\[
\frac{1}{\lambda(-x)} < \frac{4}{\sqrt{x^2 + 8} + 3 x}, x > 0.
\]
Using the change of variables $z = -x$ we get:
\[
\frac{1}{\lambda(z)} < \frac{4}{\sqrt{z^2 + 8} - 3 z}, z \leq 0.
\]
The exchange of nominator and denominator leads to:
\begin{equation}
\label{eq:baricz_bound}
\lambda(z) > \frac14 \left(\sqrt{z^2 + 8} -3 z \right), z \leq 0. \end{equation}

The inequality~\eqref{eq:lambda_z_2} is equivalent to:

\[
\lambda(z) > \frac{-3 z - z^3}{2 + z^2}.
\]

So, using the bound~\eqref{eq:baricz_bound} it is sufficient to prove, that for $z < 0$:
\[
\frac14 \left(\sqrt{z^2 + 8} -3 z \right) > \frac{-3 z - z^3}{2 + z^2}.
\]
For $x = -z \geq 0$ we get an equivalent inequality:
\[
\frac14 \left(\sqrt{x^2 + 8} + 3 x \right) > \frac{3 x + x^3}{2 + x^2}.
\]
Rearranging the terms we get the inequality:
\begin{equation}
\label{eq:inequality_sqrt}
\sqrt{x^2 + 8} > 4 \frac{3 x + x^3}{2 + x^2} - 3 x.    
\end{equation}

For the right side we have:
\[
4 \frac{3 x + x^3}{2 + x^2} - 3 x = \frac{12 x + 4 x^3 - 3 x^3 - 6 x}{2 + x^2} = \frac{x^3 + 6 x}{2 + x^2} = x + \frac{4 x}{x^2 + 2}.
\]

Both parts of~\eqref{eq:inequality_sqrt} are positive, so~\eqref{eq:inequality_sqrt} is equivalent to:
\[
x^2 + 8 > \left(x + \frac{4 x}{x^2 + 2} \right)^2.
\]
Moving $x^2$ to the right side we get:
\[
8 > \frac{8 x^2}{x^2 + 2} + \frac{16 x^2}{(x^2 + 2)^2}.
\]
Moving the first term at the right side to the left we get:
\[
8 \left( 1 - \frac{x^2}{x^2 + 2} \right) > \frac{16 x^2}{(x^2 + 2)^2}.
\]
Subtracting $\frac{x^2}{x^2 + 2}$ from $1$ we obtain:
\[
8 \frac{2}{x^2 + 2} > \frac{16 x^2}{(x^2 + 2)^2}.
\]

As
\[
1 > \frac{x^2}{x^2 + 2}.
\]
we proved~\eqref{eq:main_th} for $z < 0$.
The remaining part is the proof of~\eqref{eq:main_th} for $z > 0$.

As 
\[
-1 \leq \lambda'(z) \leq 0,
\]
we have:
\[
\lambda(z) + 2z - z^2\lambda(z) - z\lambda^2(z) = \lambda(z) + 2 z + z \lambda'(z) \geq \lambda(z) + 2 z + (-z) = \lambda(z) + z > 0.
\]
 
QED.
\end{proof}

\subsubsection{Proof of Theorem~\ref{theorem:sign_half_normal}}

\begin{proof}
\[
\bbE(e^{-u} | \varepsilon) = \frac{\Phi \left(\frac{\mu_*}{\sigs} - \sigs \right)}{\Phi \left(\frac{\mu_*}{\sigs} \right)} e^{-\mu_* + \frac12 \sigs^2}.
\]

For $A$ we have
\[
A = \frac{\mu_*}{\sigs} = -\varepsilon \frac{\sigs}{\sigv^2}.
\]
Then the partial derivative with respect to $\sigs$ has the form:
\[
\frac{\partial A}{\partial \sigs} = -\frac{\varepsilon}{\sigv^2}.
\]

For $E(e^{-u} | \varepsilon)$ we obtain:
\[
\bbE(e^{-u} | \varepsilon) = \frac{\Phi(A - \sigs)}{\Phi(A)} e^{-A\sigs + \frac12 \sigs^2}.
\]
Then the partial derivative has the form:
\begin{align*}
\frac{\partial \bbE(e^{-u} | \varepsilon)}{\partial \sigs} &= \frac{\partial}{\partial \sigs} \left[
\frac{\Phi(A - \sigs)}{\Phi(A)} \right] e^{-A\sigs + \frac12 \sigs^2} \\  
&+ \frac{\Phi(A - \sigs)}{\Phi(A)} e^{-A\sigs + \frac12 \sigs^2} \frac{\partial}{\partial \sigs}  \left(-A \sigs + \frac12 \sigs^2 \right).
\end{align*}

We continue to expand the terms above using in addition the following:
\[
-A \sigs + \frac12 \sigs^2 = \frac{\varepsilon \sigs^2}{\sigv^2} + \frac12 \sigs^2 = \sigs^2 \left( \frac12 + \frac{\varepsilon}{\sigv^2} \right).
\]

So,
\begin{align*}
\frac{\partial \bbE(e^{-u} | \varepsilon)}{\partial \sigs} &= 
\frac{1}{\Phi^2(A)} 
\Biggl(
 \phi(A - \sigs) \left(-\frac{\varepsilon}{\sigv^2} - 1 \right) \Phi(A) \\
 &- \Phi(A - \sigs) \phi(A) \left(- \frac{\varepsilon}{\sigv^2} \right)
\Biggr)
 e^{-A\sigs + \frac12 \sigs^2} + \\
&+ \frac{\Phi(A - \sigs)}{\Phi(A)} e^{\sigs^2 \left(\frac12 + \frac{\varepsilon}{\sigv^2} \right)} 2 \sigs \left(\frac12 + \frac{\varepsilon}{\sigv^2} \right) =\\
&= \frac{\Phi(A - \sigs)}{\Phi(A)} e^{\sigs^2 \left(\frac12 + \frac{\varepsilon}{\sigv^2} \right)}
\frac{1}{\sigs} \sigs \Bigg( \lambda(A - \sigs) \left(-\frac{\varepsilon}{\sigs^2} - 1 \right) - \\
&- \lambda(A) \left(- \frac{\varepsilon}{\sigv^2} \right) + 2  \sigs \left(\frac12 + \frac{\varepsilon}{\sigv^2} \right) \Bigg ).
\end{align*}

So, we need to prove that:
\[
\sigs \left(\lambda(A - \sigs) \left(-\frac{\varepsilon}{\sigs^2} - 1 \right) - \lambda(A) \left(- \frac{\varepsilon}{\sigv^2} \right) + 2  \sigs \left(\frac12 + \frac{\varepsilon}{\sigv^2} \right) \right) < 0.
\]
Or equivalently:
\[
\lambda(A - \sigs) (A - \sigs) - \lambda(A) A + \sigs^2 - 2 A \sigs < 0.
\]
If $x = A - \sigs$, then $A = x + \sigs = x + t, t > 0$ and we have:
\[
\lambda(x) x - \lambda(x + t) (x + t) + t^2 - 2 (x + t) t < 0
\]
Opening brackets we get:
\[
\lambda(x) x - \lambda(x + t) (x + t) - t^2 - 2 x t < 0
\]
So, we need to prove that for $t > 0$ and arbitrary $x$:
\[
\psi(x, t) = (x + t) \lambda(x + t) - x \lambda(x) + t^2 + 2 x t > 0.
\]
It holds that $\psi(x, 0) = 0$.
Then it is sufficient to prove that the function is increasing i.e. the corresponding partial derivative is positive:
\[
\frac{\partial \psi(x ,t)}{\partial t} = \lambda(x + t) + (x + t) \lambda'(x + t) + 2 t + 2 x > 0.
\]
Using the change of variables $z = x + t$ we get the inequality for $z \in (-\infty, +\infty)$:
\[
\lambda(z) + z \lambda'(z) + 2 z > 0.
\]
For $z > 0$ it is obvious that:
\[
z (1 + \lambda'(z)) + (z + \lambda(z)) > 0,
\]
as $0 < 1 + \lambda'(z) < 1$ and $z + \lambda(z) > 0$.

For $z < 0$ it is more complicated.
We need to prove, that for $z < 0$
\[
\lambda(z) + 2 z - z^2 \lambda(z) - z \lambda^2(z) > 0.
\]
Substituting $\lambda(z)$ by $\frac{\phi(z)}{\Phi(z)}$ we get:
\[
\phi(z) \Phi(z) + 2 z \Phi^2(z) - z^2 \phi(z) \phi(z) - z \phi^2(z) > 0.
\]
We apply the change of variables $x = -z$, so for $x > 0$ we want to prove:
\[
\phi(x) \Phi(-x) - 2 x \Phi^2(-x) - x^2 \phi(x) \Phi(-x) + x \phi^2(x) > 0.
\]
Let $F(x) = \Phi(-x)$.
Then we need to prove for $x > 0$:
\[
\phi(x) F(x) + 2 x F^2(x) - x^2 \phi(x) F(x) + x \phi^2(x) > 0.
\]
Rearranging terms we get the inequality:
\begin{equation}
\label{eq:main}
(1 - x^2) \phi(x) F(x) + x \phi(x)^2 - 2 x F(x)^2 > 0 \text{ for } x > 0,
\end{equation}
where $F(x) = 1 - \Phi(x)$.
To prove it we'll split the whole interval $(0, \infty)$ into two smaller ones: $(0, 1]$ and $(1, \infty)$.

\paragraph{$x \in (1, \infty)$}

In this case $1 - x^2 < 0$, and to prove~\eqref{eq:main} it is sufficient to prove:
\[
(1 - x^2) \frac{4}{\sqrt{x^2 + 8} + 3 x} + x - 2 x \frac{16}{(\sqrt{x^2 + 8} + 3 x)^2} > 0,
\]
as it holds that $F(x) \leq \frac{4}{\sqrt{x^2 + 8} + 3 x} \phi(x)$ according to (Baricz, 2007).

Then by multiplying by $(\sqrt{x^2 + 8} + 3 x)^2$ we get:
\begin{align*}
& 4 (1 - x^2) (\sqrt{x^2 + 8} + 3 x) + x (\sqrt{x^2 + 8} + 3 x)^2 - 32 x \\    
&= 4 \sqrt{x^2 + 8} + 12 x - 4 x^2 \sqrt{x^2 + 8} - 12 x^3 - 32 x \\
& \qquad + x(x^2 + 8 + 9 x^2 + 6 x \sqrt{x^2 + 8}) \\
&=  4 \sqrt{x^2 + 8} - 4 x^2 \sqrt{x^2 + 8} - 20 x - 12 x^3 + 10x^3 + 8 x + 6 x^2 \sqrt{x^2 + 8}  \\
&= 4 \sqrt{x^2 + 8} + 2 x^2 \sqrt{x^2 + 8} - 12 x - 2 x^3.
\end{align*}

So, we need to prove that: 
\begin{align*}
&4 \sqrt{x^2 + 8} + 2 x^2 \sqrt{x^2 + 8} > 12 x + 2 x^3 \Leftrightarrow \\
&\sqrt{x^2 + 8} (2 + x^2) > 6 x + x^3.
\end{align*}
As the left side and the right side of inequality are positive for $x > 0$
it is equivalent to the inequalities for the squares of both sides:
\begin{align*}
&(x^2 + 8) (2 + x^2)^2 > (6 x + x^3)^2 \Leftrightarrow \\
& (x^2 + 8) (4 + 4 x^2 + x^4) > 36 x^2 + 12 x^4 + x^6 \Leftrightarrow \\
& 4 x^2 + 4 x^4 + x^6 + 32 + 32 x^2 + 8 x^4 > 36 x^2 + 12 x^4 + x^6 \Leftrightarrow \\
& 32 + 36 x^2 + 12 x^4 + x^6 > 36 x^2 + 12 x^4 + x^6 \Leftrightarrow \\
& 32 > 0.
\end{align*}
We proved the inequality for the case $x > 1$.

\paragraph{Value of $x$ between $0$ and $1$}

We use the following strategy: we split to smaller intervals, for each interval we provide a bound $\phi(x) > c F(x)$ defined by the left edge of the interval as $\left(\phi(x) / F(x) \right)' > 0$ according to Lemma~\ref{Lemma0}, and then get a quadratic inequality or a linear inequality, which is easy to check.

Let's start with $x \in (0.9, 1]$.
$\phi(x) > 1.44 F(x)$, then 
\begin{align*}
&(1 - x^2) \phi(x) F(x) + x \phi(x)^2 - 2 x F^2(x) > \\
&(1 - x^2) 1.44 F^2(x) + 2.07 x F^2(x) - 2 x F^2(x) \geq \\
&2.07 x F^2(x) - 2 x F^2(x) > 0.07 x F^2(x) > 0.
\end{align*}

We proceed in a similar way for other intervals.
If $x \in (0.83, 0.9]$, then $\phi(x) > 1.39 F(x)$.
Then 
\begin{align*}
&(1 - x^2) \phi(x) F(x) + x \phi(x)^2 - 2 x F(x)^2 \\
& \qquad > 1.39 (1 - x^2) F(x)^2 + 1.93 x F(x)^2 - 2 x F(x)^2 \geq 0.
\end{align*}

If $x \in (0.65, 0.83]$, then $\phi(x) > 1.25 F(x)$.
Then 
\begin{align*}
&(1 - x^2) \phi(x) F(x) + x \phi(x)^2 - 2 x F(x)^2 \\
& \qquad > 1.25 (1 - x^2) F(x)^2 + 1.5625 x F(x)^2 - 2 x F(x)^2 \geq 0.
\end{align*}

If $x \in (0.4, 0.65]$, then $\phi(x) > 1.05 F(x)$.
Then 
\begin{align*}
&(1 - x^2) \phi(x) F(x) + x \phi(x)^2 - 2 x F(x)^2 \\
& \qquad > 1.05 (1 - x^2) F(x)^2 + 1.1025 x F(x)^2 - 2 x F(x)^2 \geq 0.
\end{align*}

If $x \in [0, 0.4]$, then $\phi(x) > 0.75 F(x)$.
Then 
\begin{align*}
&(1 - x^2) \phi(x) F(x) + x \phi(x)^2 - 2 x F(x)^2 \\
& \qquad > 0.75 (1 - x^2) F(x)^2 + 0.5625 x F(x)^2 - 2 x F(x)^2 \geq 0.
\end{align*}

QED.
\end{proof}

\subsection{Proof of the Statement~\ref{statement:effect_te_sign}}

\begin{proof}
We consider a discrete random variable $u$.
It takes values $u_i = z\, u_{i0}, \, i=1, 2$ with probabilities $p_1, p_2$
correspondingly, where $u_{i0} > 0, \, i=1, 2$.
Since $v\sim \mathcal{N}(0, \sigv^2)$ and $u$ are independent and 
$\varepsilon = v - u$, the joint distribution of $u, \varepsilon$ has the form
\[
f(u = u_i, \varepsilon) = p_i \frac{1}{\sqrt{2 \pi} \sigv} \exp \left(-\frac{(u_i + \varepsilon)^2}{2 \sigv^2} \right).
\]
Thus, the marginal pdf of $\varepsilon$ has the form:
\begin{equation}
\label{eq:density_eps}
f(\varepsilon) = \sum_{i = 1}^2 p_i \frac{1}{\sqrt{2 \pi} \sigv} \exp \left(-\frac{( u_i + \varepsilon)^2}{2 \sigv^2} \right).
\end{equation}
The conditional distribution has the form:
\[
P(u = u_i | \varepsilon) = \frac{p_i \frac{1}{\sqrt{2 \pi} \sigv} \exp \left(-\frac{( u_i + \varepsilon)^2}{2 \sigv^2} \right)}
{\sum_{i = 1}^2 p_i \frac{1}{\sqrt{2 \pi} \sigv} \exp \left(-\frac{( u_i + \varepsilon)^2}{2 \sigv^2} \right)}
=\frac{p_i e^{-w_i} } {p_1 e^{-w_1}+p_2 e^{-w_2}}, \,\,\, i=1, 2,
\]
where $w_i = \frac{( u_i + \varepsilon)^2} {2 \sigv^2} = \frac{(z\, u_{i0} + \varepsilon)^2} {2 \sigv^2}$.

Then observation-specific technical efficiency is 
\begin{align}
TE &= \bbE \left(e^{-u} | \varepsilon\right) 
= \sum_{i = 1}^2 { e^{-u_i} \frac{p_i e^{-w_i} } {\sum_{j = 1}^2 p_j e^{-w_j}} }
=  \frac{ \sum_{i = 1}^2 p_i  e^{-u_i} e^{-w_i} } {\sum_{j = 1}^2 p_j e^{-w_j}}
\notag \\
&=  \frac{ \sum_{i = 1}^2 p_i  e^{-z \,u_{i0}} e^{-w_i} } {\sum_{j = 1}^2 p_j e^{-w_j}}. \label{eq:TE_discrete}
\end{align}

Then the marginal effect $\frac{\partial TE}{\partial z}$ equals:
\begin{align*}
\frac{\partial TE}{\partial z} 
&= \frac{1}{\sum_{j = 1}^2 p_j e^{-w_j}}
\frac{\partial }{\partial z} \sum\nolimits_{i = 1}^2 p_i  e^{-z \,u_{i0}} e^{-w_i}\\
&- \frac{1}{ \left(\sum_{j = 1}^2 p_j e^{-w_j} \right)^2}
\left( \sum\nolimits_{i = 1}^2 p_i  e^{-z \,u_{i0}} e^{-w_i} \right)
\frac{\partial }{\partial z} \sum\nolimits_{j = 1}^2 p_j e^{-w_j}\\
&= - \frac{1}{\sum_{j = 1}^2 p_j e^{-w_j}} 
  \sum\nolimits_{i = 1}^2 p_i  e^{-z \,u_{i0}} e^{-w_i} (u_{i0}+ w'_i)\\
&+ \frac{1}{ \left(\sum_{j = 1}^2 p_j e^{-w_j} \right)^2}
\left( \sum\nolimits_{i = 1}^2 p_i  e^{-z \,u_{i0}} e^{-w_i} \right)
  \left( \sum\nolimits_{j = 1}^2 p_j e^{-w_j} w'_j \right),
\end{align*}
where $w'_i =\frac{\partial}{\partial z}w_i   = \frac{z\, u_{i0}^2 + \varepsilon\, u_{i0}} { \sigv^2}$
\end{proof}

\section{Identifiability of the normal-discrete model}
\label{sec:identifiability_check}

We examined the discrete model in a number of ways.
The most important issue to check was identifiability of the model.

We use the dataset of size $1000$, generated with the normal-discrete model,
which we used for Fig.\ref{fig:hat_TE_against_TE} in Section~\ref{sec:discrete_distr}. 
We use the maximum likelihood approach with  p.d.f. from~\eqref{eq:density_eps} 
to estimate the normal-discrete model. Estimated $\widehat{TE}_i$ for this model were calculated  from~\eqref{eq:TE_discrete}. Also for this data we estimated two 
misspecified models: normal-half normal and normal-exponential and derived predicted
technical efficiencies $\widehat{TE}_i$ for these models. 
Figures~\ref{fig:hat_TE_against_TE_three_models} contain comparison of true values of $TE$ and their three estimates $\widehat{TE}_i$ using three different models.
We see that if the model is correctly specified, obtained estimates are close to the real ones.
While, if we start to use common, but misspecified normal-half-normal and normal-exponential models, 
the estimates are worse.

\begin{figure}[h]
	\centering
	\begin{subfigure}[b]{0.485\textwidth}
		\includegraphics[width=\textwidth]{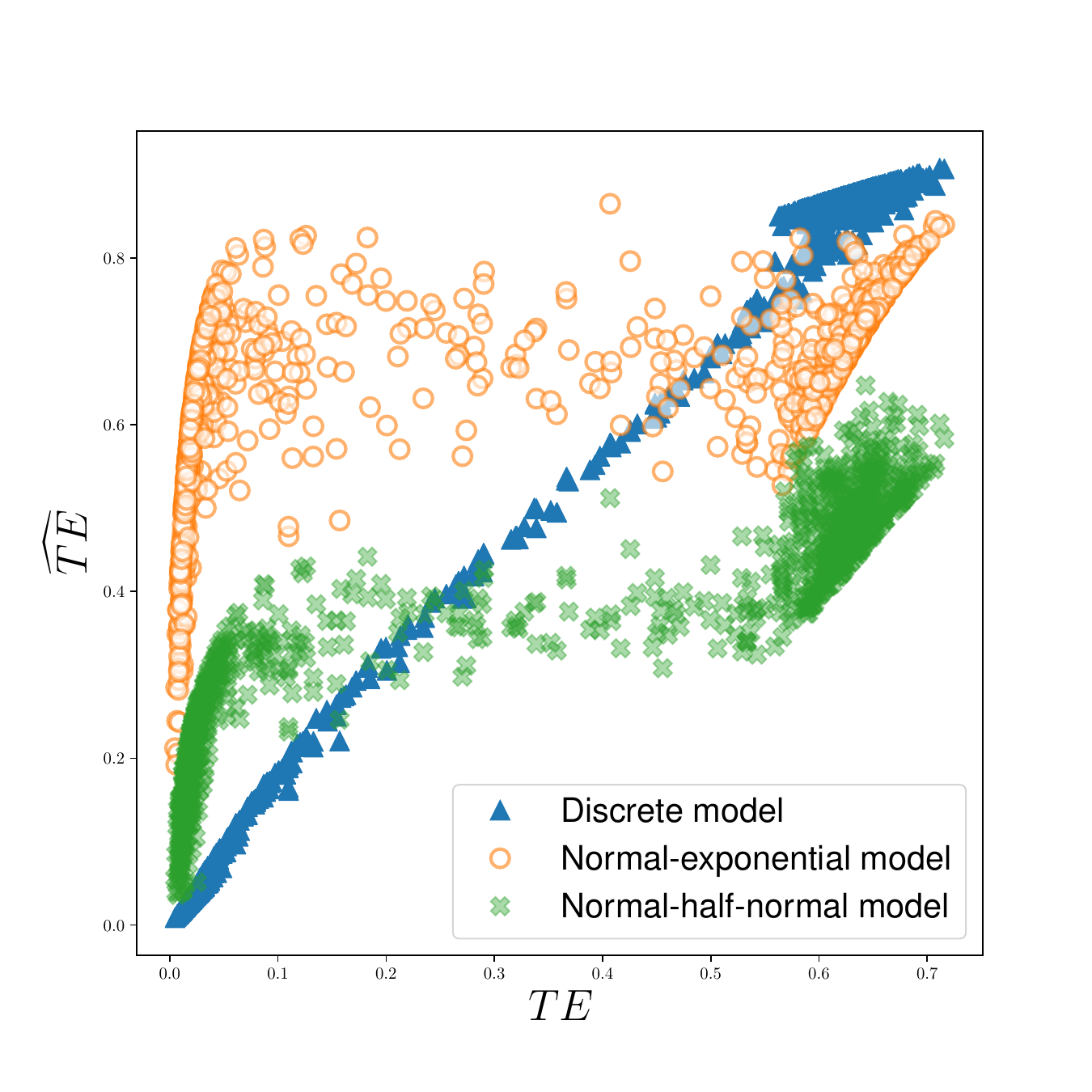}
		\caption{Comparison of all estimates of $\widehat{TE}$ and true $TE$}
	\end{subfigure}
	~ 
	\begin{subfigure}[b]{0.485\textwidth}
		\includegraphics[width=\textwidth]{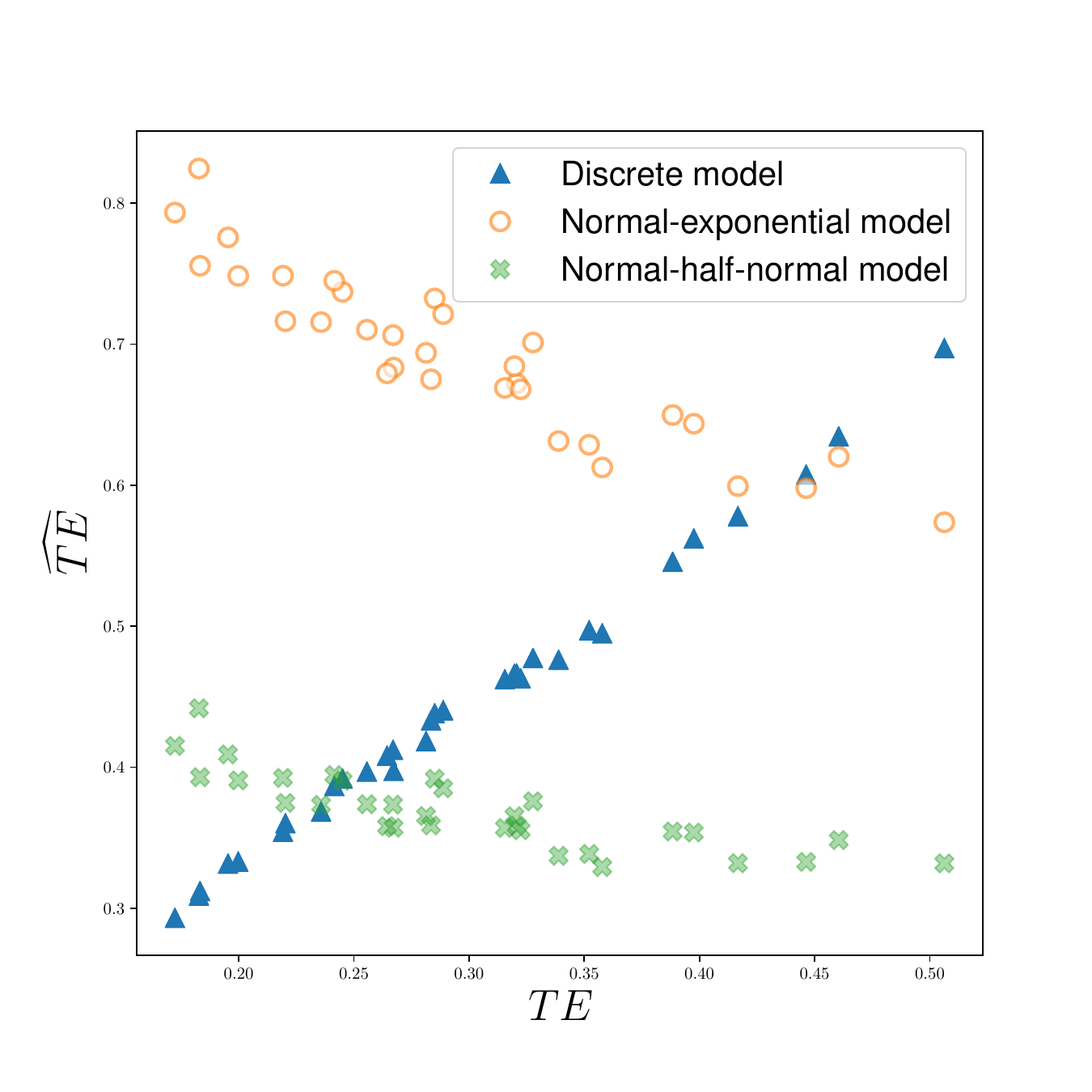}
		\caption{Selected points with $\varepsilon$ between $-2.3$ and $-2.1$}
	\end{subfigure}
	\caption{Comparison of estimates $\widehat{TE}$ using a normal-discrete, a normal-half-normal and a normal-exponential models and true $TE$ obtained using a normal-discrete model}	
	\label{fig:hat_TE_against_TE_three_models}
\end{figure}

\begin{table}[h]
	\centering
	\begin{tabular}{lcc}
		\hline
		Model     & Correlation & Correlation  \\
		&             & $-2.3 < \varepsilon < -2.1 $ \\
		\hline
		Normal-discrete        & $0.9816$ & $\phantom{-}0.9971$ \\
		Normal-half-normal     & $0.9451$  & $-0.8768$  \\
		Normal-exponential     & $0.7616$ & $-0.9285$ \\
		\hline     
	\end{tabular}
	\caption{Spearman rank correlations for true values and the three estimates of $TE$ if the true model is normal-discrete}
	\label{tab:quality_estimates_three_models}
\end{table}

Spearman rank correlation between true $TE$ and the three predicted $\widehat{TE}$
are 
provided in Table~\ref{tab:quality_estimates_three_models}.
The highest rank correlation is obtained when the true model is estimated. 
The correlation is smaller for the for the normal-half-normal model and is even worse for the normal-exponential model. But for the subset of observations selected by the condition  $-2.3 < \varepsilon < -2.1 $ both misspecified models provide strongly negative rank correlations of predicted  $\widehat{TE}$ and true values of the technical efficiency $TE$.

\section*{References}

\bibliographystyle{apa} 
\bibliography{article.bib}

\end{document}